	\newtheorem{theorem}{Theorem}
	\newtheorem{lemma}{Lemma}
	\theoremstyle{definition}
	\newtheorem{definition}{Definition}
\newcommand{\ncom}{\newcommand}
\ncom{\set}[2]{\left\{#1\:\middle|\:#2\right\}}				
\ncom{\ket}[1]{\left|#1\right\rangle}						
\ncom{\bra}[1]{\left\langle#1\right|}						
\ncom{\born}[2]{\left|\left\langle#1\middle|#2\right\rangle\right|^2}	
\ncom{\expv}[3]{\left\langle#1\middle|#2\middle|#3\right\rangle}	
\ncom{\cto}{\joinrel\mapsto\joinrel}						
\ncom{\cv}[3]{\begin{pmatrix} #1\\#2\\#3\end{pmatrix}}			
\ncom{\dee}{\,\mathrm{d}}							
\ncom{\pee}{\mathds{P}}								
\DeclareMathOperator{\Forall}{\forall}
\DeclareMathOperator{\supp}{\mathrm{supp}}
\begin{document}


\begin{center}
{\Large\sc{Constraints on Macroscopic Realism\\ Without Assuming Non-Invasive Measurability}}\\
\vspace{1.5em}
{\Large R. Hermens}\\
\emph{Faculty of Philosophy, University of Groningen}\\
\vspace{0.5em}
{\Large O.J.E. Maroney}\\ %
\emph{Faculty of Philosophy and Wolfson College, University of Oxford}\\ %
\vspace{1.5em}
\today
\end{center}

\begin{abstract}
Macroscopic realism is the thesis that macroscopically observable properties must always have definite values.  
The idea was introduced by \citet{LeggettGarg85}, who wished to show a conflict with the predictions of quantum theory, by using it to derive an inequality that quantum theory violates.  
However, Leggett and Garg's analysis required not just the assumption of macroscopic realism \textit{per se}, but also that the observable properties could be measured non-invasively.  
In recent years there has been increasing interest in experimental tests of the violation of the Leggett-Garg inequality, but it has remained a matter of controversy whether this second assumption is a reasonable requirement for a macroscopic realist view of quantum theory.  
In a recent critical assessment \citet{MaroneyTimpson16} identified three different categories of macroscopic realism, and argued that only the simplest category could be ruled out by Leggett-Garg inequality violations.  
\Citet{Allen16} then showed that the second of these approaches was also incompatible with quantum theory in Hilbert spaces of dimension 4 or higher.  
However, we show that the distinction introduced by Maroney and Timpson between the second and third approaches is not noise tolerant, so unfortunately Allen's result, as given, is not directly empirically testable.  
In this paper we replace Maroney and Timpson's three categories with a parameterization of macroscopic realist models, which can be related to experimental observations in a noise tolerant way, and recover the original definitions in the noise-free limit.  
We show how this parameterization can be used to experimentally rule out classes of macroscopic realism in Hilbert spaces of dimension 3 or higher, without any use of the non-invasive measurability assumption.  
Even for relatively low precision experiments, this will rule out the original category of macroscopic realism, that is tested by the Leggett-Garg inequality, while as the precision of the experiments increases, all cases of the second category and many cases of the third category, will become experimentally ruled out.
\end{abstract}

\tableofcontents



\section{Introduction}

The concept of macroscopic realism was introduced by \citet{LeggettGarg85} to focus attention upon an apparent inconsistency between quantum mechanics and our experience of the real world.
Roughly speaking, macroscopic realism maintains that a macroscopically observable property must always have a definite value.
Therefore the only possible states are ones for which macroscopic observables take definite values.
Leggett and Garg further argued that this view could be shown to be inconsistent with observable predictions of quantum theory, by deriving an inequality for the correlations between a sequence of measurements of the macro-observable, that quantum theory could, in principle, violate.

However, Leggett and Garg's derivation required, in addition to macroscopic realism, the use of another assumption: that it was possible, in special cases, to measure the macro-observable non-invasively.
This left open the possibility that a macroscopically realist interpretation of quantum theory is possible that still violates the inequality by denying the possibility of non-invasive measurements.

In recent years there have been increasingly sophisticated experiments seeking to test the Leggett-Garg inequality violations in quantum theory.\footnote{(\citealp{Palacios10}; \citealp{Dressel11}; \citealp{Goggin11}; \citealp{Xu11}; \citealp{Knee12}; \citealp{George13}; \citealp{Knee16}).}
These experiments have used a variety of materials, including superconducting devices, photons, and nuclear and electron spins in silicon and in diamond, and techniques, including weak and strong projective measurements, all confirming the violation.
It should be noted that none of these experiments have actually tested a macroscopically observable property: rather, they have shown violations of the Leggett-Garg inequality for microscopic quantum observables, and represent a proof of principle that tests of the inequality are possible.
Alongside these tests has been a revival of discussion of the significance of the choice of macroscopic realism vs non-invasive measurability.\footnote{(\citealp{FosterElby91}; \citealp{ElbyFoster92}; \citealp{Bacciagaluppi15}; \citealp{Clemente16}; \citealp{Hess16}; \citealp{MaroneyTimpson16}).}

In this paper we follow the analysis of Maroney and Timpson, who identified three types of macroscopic realism, and argued that experiments violating the Leggett-Garg inequality only ruled out one, albeit rather natural, type.
\Citet{Allen16}, building on earlier work by \citet{Allen15}, then showed problems for a second type.
However, the distinction between this type, and the third remaining type as introduced by Maroney and Timpson, is not noise tolerant, and so Allen's result can not be directly subjected to experimental testing.

Our main concern in this paper is to show how experimental tests of macroscopic realism are possible without making use of the non-invasive measurability assumption, and in doing so show that it is possible to rule out a wider class of models than is possible using Leggett-Garg inequality violations.
We start in Section \ref{precon} by using the ontic models formalism (a general framework used for classifying realist interpretations of operational theories) to characterize macroscopically realist models for quantum theory.
By looking at the relationship between macroscopic realism and the eigenvalue-eigenstate link, we will identify the types of macroscopic realism discussed by Maroney and Timpson: here called eigenpreparation mixing models (which are in conflict with Leggett-Garg inequality violations), eigenpreparation supported models, and eigenpreparation undermining models.
Broadly speaking, eigenpreparation mixing is equivalent to macroscopic realism with a strict interpretation of the eigenvalue-eigenstate link, eigenpreparation support keeps a more generalized form of the eigenvalue-eigenstate link, and eigenpreparation undermining models maintain macroscopic realism without a connection to the link.

In Section \ref{QMEU} we review how eigenpreparation mixing and eigenpreparation supported models are incompatible with quantum theory for Hilbert spaces with dimension two or more and three or more respectively, extending the earlier result in \citep{Allen16}.
However, in Section \ref{NoiseT} we show that the distinction between eigenpreparation supported and eigenpreparation undermining macroscopic realism needed for this result, is subject to finite precision loopholes, which means that no experimental test can directly distinguish them.
To address this problem, we introduce two parameters for characterizing macroscopically realist models for quantum theory, which qualitatively distinguish eigenpreparation supported from eigenpreparation undermining models, and which can be tested against experimental data.
With this parameterization, eigenpreparation mixing models can be experimentally ruled out, without using assumptions of non-invasive measurability.
Qualitatively eigenpreparation supported models can also be ruled out, with a larger range of such models being ruled out as experimental precision increases.
For very high precision measurements only some eigenpreparation undermining models remain viable, thus providing a generalization of the noise-free results of Section \ref{OESMRQM} in the limit.
Overall, we will show that a much larger class of macroscopically realist models can be experimentally ruled out than is allowed by Leggett-Garg inequality violations, and without making any use of the assumption of non-invasive measurability.



\section{Macroscopic realism and the eigenvalue-eigenstate link}\label{precon}

Leggett and Garg originally defined macroscopic realism in terms of the existence of macroscopically distinct states:
\begin{quotation}
A macroscopic system with two or more macroscopically distinct states available to it will at all times \textit{be} in one or other of those states. \citep[p. 857]{LeggettGarg85}
\end{quotation}
Intuitively, however, their idea is that it is certain observable \textit{properties}, such as the positions of tables and chairs, that have definite values at all times. This shifts the focus from macroscopic states to macroscopic observables.
That this shift does not alter the meaning should be clear: two states will be macroscopically distinct if, and only if, they assign different values to some macroscopic observable.
But it is non-trivial to make precise why some observables are macroscopic and others are not.\footnote{Although there have been several noteworthy attempts to make the notion of macroscopicity precise. See for example \citep{YadinVedral16} and references therein.}
What we are interested in here is a proof of principle about what kinds of realism about observable properties can be shown to be incompatible with quantum mechanics.
We therefore follow the standard in the literature and set aside the question of what notion of macroscopicity is supposed to be captured by the ``macro''-part.
Although neither the observables we consider here, nor the ones that have been experimentally investigated, fit our intuitive notion of macroscopicity, the results we obtain here do rule out a particular form of realism about these observables.
Whether the results can then be scaled to more \emph{macroscopic} observables is for later concern.
That this is theoretically a possibility is almost trivially so, but whether it is also experimentally possible ultimately relies on what we can technologically achieve and on the ultimate validity of quantum mechanics on the macroscopic scale.

A useful way to approach macroscopic realism is via another idea one often finds in orthodox expositions of quantum mechanics: the eigenvalue-eigenstate link.  This is the axiom that \emph{an observable for a system has a definite value if and only if the system is in an eigenstate for that observable}.
The conjunction of this axiom with the idea that the macro-observables are always value definite yields the requirement that a system is always in one of the eigenstates of a macro-observable.
In other words, the observable imposes a superselection rule: every possible state is a mixture of eigenstates (i.e., a density operator as opposed to a proper superposition).
This is the type of macroscopic realism that \citet[\S3]{MaroneyTimpson16} attribute to \citet{LeggettGarg85}, and can be shown to be ruled out by experimental violations of the Leggett-Garg inequality.

It is important to note that macroscopic realism \emph{per se} is not in conflict with quantum mechanics and so there are principled limitations on what can be shown.
A useful example is the de Broglie-Bohm theory in which all particles have a definite position at all time.\footnote{See also (\citealp{KoflerBrukner13}; \citealp{Bacciagaluppi15}).}
If we assume that macroscopic properties supervene on particle configurations, then these always have well-defined values in this theory.
The problem here is that the relationship between the quantum state of the system and the observable having a definite value in the de Broglie-Bohm theory is different from that in orthodox quantum theory.

Although invoking the eigenvalue-eigenstate link to analyze macroscopic realism is a natural idea within the formalism of quantum mechanics, if one wishes to go beyond this formalism, it doesn't seem that natural anymore.
Because of this it is non-trivial to tease out to what extent results are specifically about quantum mechanics, or whether they also have implications beyond quantum mechanics.
Here, we aim for results that are more theory-independent, in the same spirit that violations of Bell-type inequalities have implications that carry beyond quantum mechanics.
For this we will make use of the ontic models framework (\citealp{Spekkens05}; \citealp{Harrigan07}).
We rely on quantum mechanics solely for inspiration for experimental tests of the ideas introduced, similar to its role in devising experiments in which Bell-type inequalities can be violated.


\subsection{Ontic models}\label{OMsec}


To describe experiments in an operational, theory-independent way, we make use of \emph{Prepare-Transform-Measure (PTM) models} (see also \citep[\S8]{Leifer14}).
A PTM model is a triple $(\mathcal{P},\mathcal{T},\mathcal{M})$ of three sets.
Elements of $\mathcal{P}$ represent possible preparations of the system and provide an operational state description.
Elements of $\mathcal{T}$ represent transformations, i.e., every $T\in\mathcal{T}$ is a function $T:\mathcal{P}\to\mathcal{P}$.
And finally, the elements $M\in\mathcal{M}$ represent measurements.
Specifically, with every measurement $M$ is associated a measurable space $(\Omega_M,\Sigma_M)$, with $\Omega_M$ the set of possible outcomes for the measurement $M$, and a rule which assigns to every $P\in\mathcal{P}$ a probability measure $\pee(\:.\:|M,P)$ over $(\Omega_M,\Sigma_M)$.
We then write
\begin{equation}
	\pee\left(E\middle|M,T,P\right)=\pee\left(E\middle|M,T(P)\right)
\end{equation}
to denote the probability of finding an outcome in $E$ upon a measurement $M$ after the system has been prepared according to $P$ and then transformed according to $T$.

Quantum mechanics can be used to provide PTM models in a straightforward way.
We can take $\mathcal{P}$ to be a set of quantum states, $\mathcal{T}$ a set of unitary operators, and $\mathcal{M}$ a set of self-adjoint operators.
The probabilities are then simply given by the Born rule:
\begin{equation}
	\pee\left(E\middle|A,U,\ket{\psi}\right)=\expv{\psi}{U^*P_A^E U}{\psi},
\end{equation}
where $P_A^E$ is the projection on the subspace spanned by the eigenstates of $A$ for eigenvalues in $E$.
If we assume the projection postulate, then the measurement itself also induces a transformation of the system $\ket{\psi}\mapsto P_A^E\ket{\psi}$.
These transformations can of course be added to $\mathcal{T}$.

To study a particular type of explanation for some feature of a PTM model, we look at ontic models for the PTM model.
An ontic model consists of a measurable space $(\Lambda,\Sigma)$ (where $\Lambda$ is the set of ontic states) and a triplet $(\Pi,\Gamma,\Xi)$ which serves as the counterpart for the triplet $(\mathcal{P},\mathcal{T},\mathcal{M})$ in the following way:
\begin{itemize}
\item $\Pi$ is a set of probability measures on $(\Lambda,\Sigma)$ such that for every $P\in\mathcal{P}$ there is a non-empty subset $\Pi_P\subset\Pi$ of probability measures corresponding to $P$: whenever the system is prepared according to $P$, an ontic state is selected according to some probability measure $\mu\in\Pi_P$.
\item $\Gamma$ is a set of Markov kernels\footnote{For two measurable spaces $(\Omega_1,\Sigma_1),(\Omega_2,\Sigma_2)$, a Markov kernel from the first to the second is a map $\gamma:\Sigma_2\times\Omega_1\to[0,1]$ such that for every $\omega_1\in\Omega_1$ the map $\Delta_2\mapsto\gamma(\Delta_2|\omega_1)$ is a probability measure over $(\Omega_2,\Sigma_2)$ and for every $\Delta_2\in\Sigma_2$ the map $\omega_1\mapsto\gamma(\Delta_2|\omega_1)$ is a measurable function on $(\Omega_1,\Sigma_1)$.} from $(\Lambda,\Sigma)$ to itself such that for every transformation $T\in\mathcal{T}$ there is a non-empty subset $\Gamma_T\subset\Gamma$ of Markov kernels corresponding to $T$: for every $\gamma\in\Gamma_T$ and $\mu\in\Pi_P$ we have $\mu_\gamma\in\Pi_{T(P)}$, where $\mu_\gamma$ is defined as
\begin{equation}
	\mu_\gamma(\Delta):=\int_\Lambda\gamma(\Delta|\lambda)\dee\mu(\lambda)~\Forall\Delta\in\Sigma.
\end{equation}
\item $\Xi$ is a set of Markov kernels such that for every measurement $M\in\mathcal{M}$ there is a non-empty subset $\Xi_M\subset\Xi$.
Every $\xi\in\Xi_M$ is a Markov kernel from $(\Lambda,\Sigma)$ to $(\Omega_M,\Sigma_M)$ such that for every $P\in\mathcal{P}$ and $\mu\in\Pi_P$
\begin{equation}
	\int_\Lambda\xi(E|\lambda)\dee\mu(\lambda)=\pee(E|M,P)~\Forall E\in\Sigma_M,
\end{equation}
i.e., the ontic model reproduces the predictions of the PTM model.
\end{itemize}

In this framework quantum states are associated with preparations of systems: they give rise to probability distributions over ontic states instead of necessarily being ontic states themselves.
It is not excluded that quantum states themselves can be ontic states (in which case their associated probability distribution would assign probability one to itself).
It is just that first and foremost they correspond to probability distributions over ontic states while we remain agnostic about what these ontic states themselves are.

For every PTM model one can construct an ontic model in a trivial way.
This is done by setting $\Lambda=\mathcal{P}$, $\Pi=\set{\delta_P}{P\in\mathcal{P}}$ with $\delta_P$ the Dirac measure peaked at $P$, $\Gamma=\set{\gamma_T}{T\in\mathcal{T}}$ with $\gamma_T(\Delta|P):=\mu_{T(P)}(\Delta)$ and $\Xi=\set{\xi_M}{M\in\mathcal{M}}$ with $\xi_M(E|P):=\pee(E|M,P)$.
This indicates that making use of the ontic models framework is quite a sparse assumption.
Indeed, its name is a bit misleading since there is nothing in the formalism \emph{per se} that requires one to adopt an ontological interpretation of the ontic states.
There is nothing to prevent one from interpreting the elements of $\Lambda$ as, say, a mathematical representative of the degrees of belief of a rational agent.
However, further constraints that may be imposed on the model are usually motivated with an ontological interpretation in mind, and may be artificial if one adopts a different interpretation.

We will note one feature of our use of PTM models in this paper.
Despite their usual description as an operational framework, suggesting they can be related to experimental procedures, following \citet{Spekkens05} it is common to find discussions of the `operational equivalence' of certain preparations, transformations or measurements, particularly on the topic of quantum contextuality.
Such discussions rely on quantifications over all $P\in\mathcal{P}$, $T\in\mathcal{T}$ or $M\in\mathcal{M}$ where $\mathcal{P}$, $\mathcal{T}$ or $\mathcal{M}$ applies to the whole of quantum theory.
Obviously no experimental procedure could ever test all possible preparations allowed by quantum theory (there is a continuous infinity of such procedures), so such formulations cannot be related to actual experimental tests.
Here, our intention is to consider experimental tests of macroscopic realism, for which only a finite fragment of quantum mechanics (i.e., a finite set of preparations, transformations and measurements) can be contemplated.
We will discuss some implications of this restriction in Section \ref{Discus}, including the extent to which it may present loopholes.


\subsection{Macroscopic realism \emph{per se}}

The constraint on ontic models we are concerned with here is one that aims to capture the idea of macroscopic realism.  We follow \citet{MaroneyTimpson16} in identifying this as the constraint that the macro-observable has a definite value at all times, encoded in the ontic states of the system.
Thus, for an observable $Q\in\mathcal{M}$ to be a macro-observable in the ontic model, we require that
\begin{equation}
	\xi(E|\lambda)\in\{0,1\}~\Forall\xi\in\Xi_Q,E\in\Sigma_Q.
\end{equation}
Here we have introduced the notational convention to use $Q$ for measurements of macro-observables.
Now, although value definiteness for all ontic states is a necessary requirement for an observable to be a macro-observable, it is not sufficient.
Value definiteness alone still allows a peculiar form of contextuality.
Depending on the way $Q$ is measured, it may have distinct definite values when there are Markov kernels $\xi,\xi'\in\Xi_Q$ with $\xi\neq\xi'$.
Therefore, we require that in addition $\Xi_Q$ contains precisely one element denoted $\xi_Q$.\footnote{Our definition of macro-observables is similar to the one proposed by \citet{MaroneyTimpson16}, but is  weaker. Maroney and Timpson quantified non-contextuality over all possible measurements, whereas here, in line with our restrictions, we will consider non-contextuality only with respect to measurements within the finite PTM model.  Both can be characterized as ``non-contextual value definiteness for a preferred observable''.}
So a measurement $Q\in\mathcal{M}$ is said to be a \emph{macro-observable in an ontic model} if $\Xi_Q=\{\xi_Q\}$ and $\xi_Q$ only takes the values 0 and 1.
With macroscopic realism \emph{per se} we mean nothing more than that there exists a non-trivial macro-observable.\footnote{A trivial observable would be a measurement that always yields the same outcome with probability one irrespective of how the system is prepared.}

\subsection{Eigenpreparation mixing models}\label{OEMsec}

For the eigenvalue-eigenstate link to appear in this formalism 
%
we need to generalize the idea of eigenstates to eigenpreparations.
A preparation $P\in\mathcal{P}$ is called an \emph{eigenpreparation} for a measurement $M\in\mathcal{M}$ if there is an $m\in\Omega_M$ such that $\pee(m|M,P)=1$.
Imposing macroscopic realism by the introduction of a superselection rule then amounts to the assumption that all possible preparations are mixtures of eigenpreparations of the macro-observable $Q$.
Thus preparations correspond to convex combinations of the probability distributions corresponding to eigenpreparations.
For this reason we say that type of macroscopic realism is \emph{eigenpreparation mixing}.\footnote{\Citet{MaroneyTimpson16} used a more cumbersome terminology. Instead of eigenpreparations they speak of operational eigenstates, and eigenpreparation mixing is referred to as operational eigenstate mixture macrorealism.}


Now consider a PTM model and a macro-observable $Q\in\mathcal{M}$ with possible values $\Omega_Q=\{q_1,\ldots,q_n\}$.
An ontic model for the PTM model is eigenpreparation mixing if every probability measure $\mu\in\Pi$ can be written as a mixture of probability measures corresponding to eigenpreparations.
If there exists such an ontic model, we will also call the PTM model eigenpreparation mixing.

In terms of the PTM model, we find that eigenpreparation mixing places severe constraints.
Because the probability distributions in the ontic model are required to reproduce the predictions of the PTM model, the mixing constraint immediately poses relations on the probability distributions for possible measurements.
For example, if we assume that for every $q_i$ there is precisely one eigenpreparation $P_{q_i}$, we find that every preparation $P$ can be written as a convex combination
\begin{equation}
	P=\lambda_1P_{q_1}+\ldots+\lambda_nP_{q_n}.
\end{equation}
So for every measurement $M$ we have
\begin{equation}\label{convexptm}
	\pee(m|M,P)=\lambda_1\pee(m|M,P_{q_1})+\ldots+\lambda_n\pee(m|M,P_{q_n})
\end{equation}
for all $m\in\Omega_M$.

The prime example of an eigenpreparation mixing model would be orthodox quantum theory where macroscopic realism is enforced by introducing a superselection rule to adhere to the eigen\-value-eigen\-state link.
More generally, any PTM model that satisfies the criteria of Leggett and Garg (macroscopic realism \emph{per se} and non-invasive measurability), is eigenpreparation mixing.

To see this, consider a measurement of a macro-observable $Q$, with values $\Omega_Q=\{q_1,\ldots,q_n\}$, on a system prepared according to an arbitrary preparation $P$.  The post-measurement preparation $P_{q_i}$ is defined by selecting only those cases when the measurement produces the outcome $q_i$.

Within the ontic models framework, macroscopic realism implies a partition of the set of ontic states given by
\begin{equation}\label{partition}
	\Lambda_{q_i}:=\set{\lambda\in\Lambda}{\xi_Q(q_i|\lambda)=1},~i=1,\ldots,n.
\end{equation}
If the outcome of the measurement is $q_i$, then the ontic state before the measurement must have been in the appropriate partition, $\lambda \in \Lambda_{q_i}$. If the measurement is non-invasive, the ontic state after the measurement must still be in that partition.  It follows that the post-measurement preparation $P_{q_i}$ is an eigenpreparation of $Q$ with value $q_i$.  As each post-measurement outcome occurs with probability $\pee(q_i|Q,P)$, the non-selective post-measurement preparation is just: 
\begin{equation}
	P_Q=\sum_{i=1}^n\pee(q_i|Q,P)P_{q_i},
\end{equation}
But, again, the measurement was non-invasive, so $P=P_Q$, which implies that the arbitrary preparation $P$ is eigenpreparation mixing.  Consequently, any test that rules out eigenpreparation mixing models, rules out Leggett Garg type macroscopic realism as well.

Note there is a related proof, that if an observable can be measured non-invasively and \emph{repeatably}, then eigenpreparation mixing must hold.  If the measurement is repeatable, so that if it is performed twice in rapid succession the same outcome always occurs both times, then it is straightforward that the post-measurement preparation $P_{q_i}$ must be an eigenpreparation of $Q$ with value $q_i$.  Combined with non-invasive measurability, eigenpreparation mixing follows as above.  A form of this related proof appears in \citet{ClementeKofler15}, as a proof of macroscopic realism \emph{per se} from non-invasive measurability alone.  However, they do not appear to notice that the repeatability assumption, while a standard assumption within quantum theory, does not follow from non-invasive measurability unless macroscopic realism has been presupposed.


\subsection{Eigenpreparation supported and undermining models}\label{OESSESsec}

Eigenpreparation mixing is a stronger assumption than is needed, if the aim is to preserve the flavor of the eigenvalue-eigenstate link.
In the ontic models formalism it is the ontic state that gives the macroscopic observable a definite value, not the preparation.
In fact, it is not clear what it means for a system to be in an eigenpreparation.
Instead, we should refer to the set of ontic states that may obtain given a particular preparation.

If the model is eigenpreparation mixing, it can be seen to satisfy the following weaker constraint:
\begin{quote}
\textit{The observable $Q$ has the value $q$ if and only if the ontic state of the system could have been obtained by an eigenpreparation with value $q$.}
\end{quote}
We call this the \emph{generalized eigenvalue-eigenstate link}.
Combined with macroscopic realism it implies that the eigenpreparations determine the full set of possible ontic states.
However, we now find that a superselection rule is a more draconian measure than needed to maintain the generalized eigenvalue-eigenstate link, and the resulting restriction to eigenpreparation mixing models is too narrow.
The link does not require that preparations can be written as mixtures of eigenpreparations, but merely that the support of the corresponding probability distributions is a subset of the union of the supports of all the eigenpreparations.
This is still a non-trivial constraint, and we call an ontic model that satisfies this type of macroscopic realism \emph{eigenpreparation supported}.
If the model does not satisfy it, we say it is \emph{eigenpreparation undermining}.\footnote{In \citep{MaroneyTimpson16}, these versions of macroscopic realism are called ``operational eigenstate support macrorealism'' and ``supra eigenstate support macrorealism'' respectively.}
Analogously, we say that a PTM model is eigenpreparation supported if it admits an eigenpreparation supported ontic model and we say that the PTM model is eigenpreparation undermining if it does not admit an eigenpreparation supported ontic model.


It is worthwhile to elaborate a bit more on these definitions.  As noted above, macroscopic realism about $Q$ partitions the ontic state space into the sets $\Lambda_q$.  Given an eigenpreparation $P_{q}$, it is a requirement that every probability measure $\mu\in\Pi_{P_{q}}$ assigns probability 1 to the set $\Lambda_{q}$.
Thus we trivially have one half of the generalized eigenvalue-eigenstate link: if the system is prepared according to an eigenpreparation, then $Q$ has a corresponding definite value.
The converse of this conditional is obviously not true: $Q$ always has a definite value, but not all preparations need be eigenpreparations.
But one can still wonder if, given a particular ontic state $\lambda\in\Lambda_q$ the system could have been prepared according to some eigenpreparation $P_q$.
The generalized eigenvalue-eigenstate link states that this must indeed be so, and rules out the existence of ontic states that can only arise in other preparations.
In other words, if a set of ontic states has probability zero for all eigenpreparations, then it should have probability zero for all preparations, i.e.,
\begin{equation}
	\Forall\Delta\in\Sigma:~\mu(\Delta)=0\Forall\mu\in\Pi_q,\Forall q\in\Omega_Q \implies \mu(\Delta)=0\Forall\mu\in\Pi_P,\Forall P\in\mathcal{P}.
\end{equation}
If we assume a background measure over $(\Lambda,\Sigma)$ with respect to which all probability distributions have a density, then there is a convenient way to reformulate this idea.
For a probability measure $\mu$ let $f_\mu$ denote its density.\footnote{We gloss here over the detail that $f_\mu$ is only determined up to a measure zero set. The reformulation in terms of densities mainly serves to paint a picture, and we expect readers with qualms about technical details to fill them in themselves.}
The support is defined as
\begin{equation}
	\supp f_\mu:=\set{\lambda\in\Lambda}{f_\mu(\lambda)>0}.
\end{equation}
Eigenpreparation support now translates to the claim that every ontic state should be in the support of some density function for some eigenpreparation, i.e.,
\begin{equation}
	\Forall P\in\mathcal{P},\Forall \nu\in\Pi_P:~\supp f_\nu\subset\bigcup_{q\in\Omega_Q}\bigcup_{\mu\in\Pi_q}\supp f_\mu.
\end{equation}

\begin{figure}[ht] 
\begin{center}
\begin{subfigure}{0.23\textwidth}
	\includegraphics[width=\textwidth]{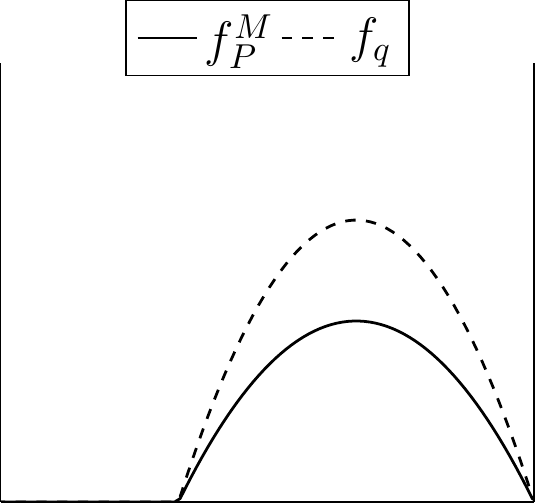}
	\caption{\label{EMMR}}
\end{subfigure}
\begin{subfigure}{0.23\textwidth}
	\includegraphics[width=\textwidth]{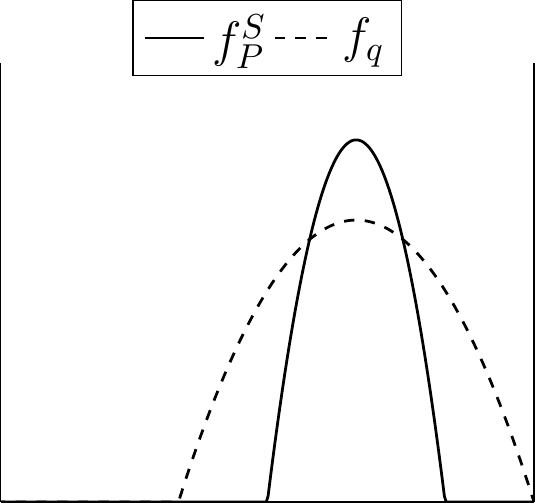}
	\caption{\label{ESMR}}
\end{subfigure}
\begin{subfigure}{0.23\textwidth}
	\includegraphics[width=\textwidth]{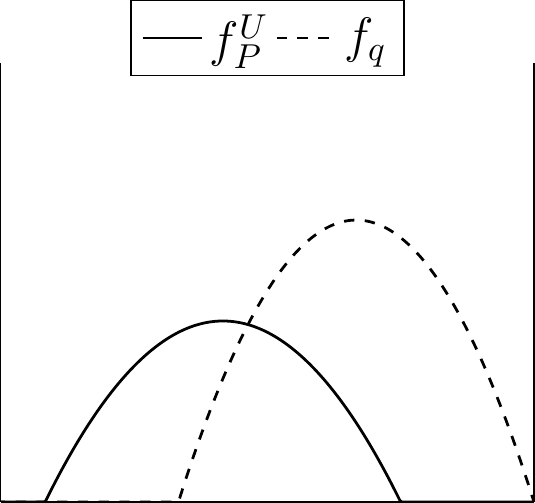}
	\caption{\label{EUMR}}
\end{subfigure}
\begin{subfigure}{0.23\textwidth}
	\includegraphics[width=\textwidth]{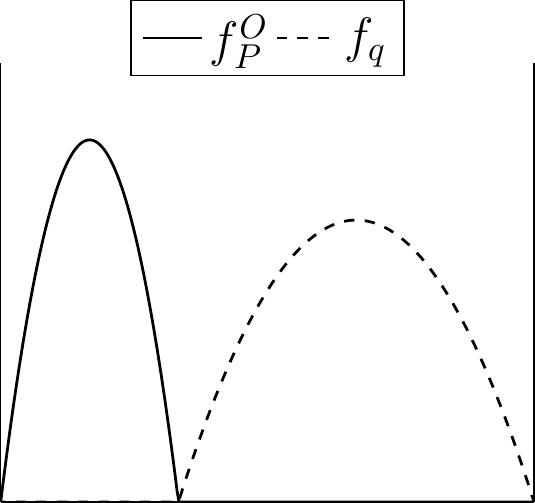}
	\caption{\label{PO}}
\end{subfigure}
\caption{Schematic representation of some ontic models. In each figure the density for an eigenpreparation ($f_q$) and for an arbitrary preparation ($f_P$) are drawn, restricted to the set of ontic states where $Q=q$. The models in \ref{EMMR} and \ref{ESMR} are eigenpreparation supported, while the other two are eigenpreparation undermining.\label{MRplots}}
\end{center}
\end{figure}

In Figure \ref{MRplots} some useful illustrations are given to clarify the definitions.
Here we zoom in on the set of ontic states $\Lambda_q$, which assign the value $q$ to $Q$.
Note that, by definition, the dichotomy eigenpreparation supported/undermining is exhaustive.
The distributions in Figures \ref{EMMR} and \ref{ESMR} are eigenpreparation supported while the distributions in Figures \ref{EUMR} and \ref{PO} are eigenpreparation undermining.
The eigenpreparation mixing models are just special cases of the eigenpreparation supported models.
But they have to adhere to the additional constraint that $f_P^M$ is just a scaled version of $f_q$ on $\Lambda_q$.
On the other extreme end there are models for which the supports are completely disjoint as in Figure \ref{PO}.
This is the case for $\psi$-ontic models, since having disjoint supports is precisely what it means for a model to be $\psi$-ontic \citep{Leifer14}.

An example of an eigenpreparation supported model is the qubit model of Kochen and Specker (\cite*{KS67}, see also \citep[\S4.3]{Leifer14}).
The de Broglie-Bohm theory, being $\psi$-ontic, is an example of an eigenpreparation undermining macro-realist theory.


\section{Quantum theory and eigenpreparation support.}\label{QMEU}

It is now well understood that macroscopic realism with non-invasive measurability is in conflict with quantum theory.
Given the close connection to eigenpreparation mixing models, it should come as no surprise that this type of macroscopic realism is also in conflict with quantum theory.
However, the details are not entirely trivial and therefore we clarify them in section \ref{QMnotEM}.

It is less known that eigenpreparation supported models are also in conflict with quantum mechanics.
This was shown by \citet{Allen16} for quantum systems with Hilbert spaces of dimension greater than 3.
In section \ref{OESMRQM} we show that this also applies to 3-dimensional Hilbert spaces.

\subsection{Quantum mechanics cannot be eigenpreparation mixing}\label{QMnotEM}
Eigenpreparation mixing models can easily be seen to be at odds with quantum mechanical predictions, using the double slit experiment (see also \S10 in the prepreint version of \citep{MaroneyTimpson16}).
If we assume that there is always a fact of the matter considering which slit the particle goes through (i.e. assume it is a macro-observable) we find a tension with eigenpreparation mixing models.
We can consider the experiment with either of the slits closed to be an eigenpreparation for states in which the particle goes through the open slit.
The measurement $M$ in \eqref{convexptm} can be taken to be the measurement of the position where the particle hits the screen.
The interference pattern we observe with both slits open is famously not a convex sum of the two patterns with one slit closed.
Hence, the preparation with two slits open is not a convex combination of the two eigenpreparations.

It is worthwhile to delve a bit more into eigenpreparation mixing models to elucidate their relation to the Leggett-Garg inequality.
Consider again a macro-observable $Q$ with possible values $\Omega_Q=\{q_1,\ldots,q_n\}$.
For an eigenpreparation $P_{q_i}$, the set of corresponding probability distributions is given by $\Pi_{P_{q_i}}$.
Now let $\Pi_{q_i}$ denote the union of all $\Pi_{P_{q_i}}$ with $P_{q_i}$ an eigenpreparation for the value $q_i$.
An eigenpreparation mixing model now requires that for every preparation $P\in\mathcal{P}$, the set $\Pi_P$ lies within the convex hull of $\bigcup_{i=1}^n\Pi_{q_i}$.
As noted above, this leads to non-trivial constraints for the PTM model.
But the situation becomes even more troublesome when transformations are in play.
As an example consider a PTM model with macro-observable $Q$ with $\Omega_Q=\{-1,1\}$ (as in the Leggett-Garg setting).
Suppose for each of the values for $Q$ there is only one eigenpreparation.
Eigenpreparation mixing now implies that for every preparation $P$ and every probability measure $\mu_P\in\Pi_P$ there are measures $\mu_-\in\Pi_{-1},\mu_+\in\Pi_{+1}$ such that
\begin{equation}
	\mu_P=\lambda_P^-\mu_-+\lambda_P^+\mu_+,
\end{equation}
for some positive reals $\lambda_P^-,\lambda_P^+$ that satisfy $\lambda_P^-+\lambda_P^+=1$.

Now consider any transformation $T$ and let $\gamma_T\in\Gamma_T$.
We then have for any measurement $M$ and $E\in\Sigma_M$ that
\begin{equation}
\begin{split}
	\pee(E|M,T,P)
	&=
	\int_\Lambda\int_\Lambda\xi_M(E|\lambda)\gamma_T(\dee\lambda|\lambda')\dee\mu_P(\lambda)\\
	&=
	\int_\Lambda\int_\Lambda\xi_M(E|\lambda)\gamma_T(\dee\lambda|\lambda')\lambda_P^-\dee\mu_-(\lambda)
	+
	\int_\Lambda\int_\Lambda\xi_M(E|\lambda)\gamma_T(\dee\lambda|\lambda')\lambda_P^+\dee\mu_+(\lambda)\\
	&=
	\lambda_P^-\pee(E|M,T,P_-)+\lambda_P^+\pee(E|M,T,P_+).
\end{split}
\end{equation}
It then follows that if two transformations $T_1,T_2$ have the same action on all the eigenpreparations, they must have the same action on all preparations.
That is,
\begin{equation}
	T_1(P_-)=T_2(P_-),~T_1(P_+)=T_2(P_+) \implies T_1=T_2.
\end{equation}
In the special case that $T(P_-)=P_-$ and $T(P_+)=P_+$ we find that $T(P)=P$ for all $P$.
So eigenpreparation mixing implies that, if the effect of a measurement (understood as a transformation) is non-invasive for eigenpreparations, it is non-invasive \emph{tout court}.
However, we know that in quantum mechanics this does not hold.
If a system is prepared in an eigenstate for $Q$, then a measurement of $Q$ does not alter the state.
On the other hand, if the system is prepared in a superposition, then a measurement causes the system to go into one of the eigenstates.
Leggett-Garg tests utilize this feature of quantum mechanics, and so can be used to rule out eigenpreparation mixing models.\footnote{See also (\citealp{MaroneyTimpson16}; \citealp{Knee16}). Another example of this feature is the three box paradox \citep{Maroney16}.}

\subsection{Quantum mechanics is eigenpreparation undermining}\label{OESMRQM}

In Section \ref{precon} three types of macroscopic realism have been introduced: eigenpreparation mixing, eigenpreparation supported, and eigenpreparation undermining.
The first is incompatible with quantum mechanics while the third is compatible with quantum mechanics.
Whether quantum mechanics is eigenpreparation supported has been an open question until quite recently.
\Citet{MaroneyTimpson16} noted that, for 2-dimensional systems, the model introduced by \citet{KS67} serves as an example of an eigenpreparation supported model, while \citet{Allen16} showed that no such model exists if the dimension is greater than 3.

In this section we give a simplified sketch of a proof to show how even in the 3-dimensional case quantum mechanics does not admit an eigenpreparation supported model.  
The theorem presented here can be obtained rigorously as a special case of Theorem \ref{mainthm} in section \ref{THMsec}, for which a detailed proof is given in the appendix.

We proceed in two steps.
First we establish a constraint on eigenpreparation supported PTM models for systems with measurements with at least three distinct outcomes.
Second, we provide an example of a quantum PTM model that violates the constraint derived in Theorem \ref{stelling1}.

\begin{theorem}\label{stelling1}
Let $(\mathcal{P},\mathcal{T},\mathcal{M})$ be a PTM model with macro-observable $Q$ with possible measurement outcomes $\{q_1,q_2,q_3\}$, a second observable $A$ with possible measurement outcomes $\{a_1,a_2,a_3\}$, eigenpreparation $P_{q_1}$ and transformation $T$ such that
\begin{subequations}\label{premis}
\begin{equation}\label{premisa}
	\pee(a_2|A,P_{q_1})=0,
\end{equation}
\begin{equation}\label{premisb}
	\pee(a_3|A,T,P_{q_1})=\pee(q_3|Q,T,P_{q_1})=0.
\end{equation}
\end{subequations}
If an eigenpreparation supported model exists, and $P_{q_1}$ is the only eigenpreparation for $q_1$, then every preparation $P\in\mathcal{P}$ must satisfy
\begin{equation}\label{thmineq}
	\pee(q_1|Q,P)\leq\pee(q_2|Q,T,P)+\pee(a_1|A,T,P).
\end{equation}
\end{theorem}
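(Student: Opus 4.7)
The plan is to reduce both sides of \eqref{thmineq} to integrals of response functions against $\mu_P$ (and the transformation kernel $\gamma_T$), and then derive a pointwise lower bound on the integrand on the relevant pieces of the state space. Set $\Lambda_{q_i}:=\{\lambda:\xi_Q(q_i|\lambda)=1\}$; response-function normalisation makes these sets pairwise disjoint, and the eigenpreparation supported hypothesis says that every $\mu_P\in\Pi_P$ is concentrated on $\Lambda_{q_1}\cup\Lambda_{q_2}\cup\Lambda_{q_3}$. Since $\xi_Q(q_1|\lambda)$ equals $1$ on $\Lambda_{q_1}$ and $0$ on $\Lambda_{q_2}\cup\Lambda_{q_3}$, the left-hand side of \eqref{thmineq} simplifies to $\pee(q_1|Q,P)=\mu_P(\Lambda_{q_1})$, and it suffices to show that $\pee(q_2|Q,T,P)+\pee(a_1|A,T,P)\geq\mu_P(\Lambda_{q_1})$.

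Next I would convert the premises into pointwise constraints. Using that $P_{q_1}$ is the unique eigenpreparation for $q_1$, premise \eqref{premisa} yields $\xi_A(a_2|\lambda)=0$ on (effectively all of) $\Lambda_{q_1}$. From premise \eqref{premisb}, together with $\xi_Q(q_3|\lambda)=1$ on $\Lambda_{q_3}$ and the eigenpreparation supported property applied to $T(P_{q_1})$, one gets that for $\mu_{P_{q_1}}$-a.e.\ $\lambda'\in\Lambda_{q_1}$ the kernel $\gamma_T(\cdot|\lambda')$ is concentrated on $\Lambda_{q_1}\cup\Lambda_{q_2}$ and satisfies $\xi_A(a_3|\lambda)=0$ $\gamma_T(\cdot|\lambda')$-almost everywhere. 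For such a kernel, a case split on its support gives either $\lambda\in\Lambda_{q_1}$ — where $\xi_Q(q_2|\lambda)=0$ and $\xi_A(a_1|\lambda)=1-\xi_A(a_2|\lambda)-\xi_A(a_3|\lambda)=1$ — or $\lambda\in\Lambda_{q_2}$, where $\xi_Q(q_2|\lambda)=1$ and $\xi_A(a_1|\lambda)\geq 0$. In both cases $\xi_Q(q_2|\lambda)+\xi_A(a_1|\lambda)\geq 1$, so integrating against $\gamma_T(\cdot|\lambda')$ gives
\begin{equation*}
\int\bigl(\xi_Q(q_2|\lambda)+\xi_A(a_1|\lambda)\bigr)\gamma_T(\dee\lambda|\lambda')\geq 1
\end{equation*}
for (essentially) every $\lambda'\in\Lambda_{q_1}$.

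Integrating this inequality against $\mu_P$ and discarding the non-negative contribution from $\lambda'\notin\Lambda_{q_1}$ then yields $\pee(q_2|Q,T,P)+\pee(a_1|A,T,P)\geq\mu_P(\Lambda_{q_1})=\pee(q_1|Q,P)$. The main obstacle is measure-theoretic rather than conceptual: one has to pass cleanly from integral-zero premises to pointwise-a.s.\ statements about $\xi_A$, and then ensure that a kernel bound established $\mu_{P_{q_1}}$-a.e.\ still applies $\mu_P$-a.e.\ on $\Lambda_{q_1}$ for an arbitrary preparation $P$. The uniqueness of $P_{q_1}$ as an eigenpreparation for $q_1$ is precisely what allows this bookkeeping to go through — without it, $\mu_P|_{\Lambda_{q_1}}$ (suitably renormalised) could represent a distinct eigenpreparation for $q_1$ on which the premises need not hold. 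A careful measure-theoretic execution of these steps is deferred to the general proof of Theorem \ref{mainthm}.
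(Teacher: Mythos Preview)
Your proposal is correct and follows essentially the same line as the paper's proof sketch: use eigenpreparation support and the uniqueness of $P_{q_1}$ to turn premise \eqref{premisa} into a pointwise constraint on $\Lambda_{q_1}$, use premise \eqref{premisb} to constrain where $\gamma_T$ can send $\Lambda_{q_1}$, and conclude that the image lies in a region where $\xi_Q(q_2|\lambda)+\xi_A(a_1|\lambda)\geq 1$, from which \eqref{thmineq} follows by integration. The one notable difference is that the paper's sketch adds the simplifying assumption that the model is also value-definite for $A$ (so one can work with the discrete partition $\Lambda_{q_i}\cap\Lambda_{a_j}$), whereas you work directly with the response functions and obtain $\xi_A(a_1|\lambda)=1$ on $\Lambda_{q_1}$ via the normalisation $\xi_A(a_1)+\xi_A(a_2)+\xi_A(a_3)=1$; your version is thus marginally more general at the sketch level, and both correctly defer the measure-theoretic bookkeeping (passing from $\mu_{P_{q_1}}$-a.e.\ to $\mu_P$-a.e.\ statements on $\Lambda_{q_1}$) to the full proof.
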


\begin{proof}[Proof sketch.]
To simplify the proof we will make an additional assumption that the ontic model is also value definite for the observable $A$ (this assumption is not required for the general proof given in the appendix).
The ontic state space can then be partitioned into sets of the form $\Lambda_{q_i}\cap\Lambda_{a_j}$ where $Q$ has the value $q_i$ and $A$ has the value $a_j$.
In this case, every probability measure $\mu$ on the ontic state space gives rise to a probability distribution over the pairs of values $(q_i,a_j)$ for the observables $Q,A$.
This distribution can be neatly summarized in a table:

\begin{equation}
\begin{tikzpicture}[baseline=(current bounding box.center),scale=0.7]
	\draw (0,0) grid (3,3);
	\matrix (mapa) [matrix of nodes, column sep={0.7cm,between origins}, row sep={0.7cm,between origins},
        	every node/.style={minimum size=6mm}, anchor=8.center,	ampersand replacement=\&] at (2.5,2.5)
		{
 		|(1)|  $\mu$ \& |(2)| $\Lambda_{a_1}$ \& |(3)| $\Lambda_{a_2}$ \& |(4)| $\Lambda_{a_3}$ \& \\
 		|(5)|  $\Lambda_{q_1}$ \& |(6)|  $p_1$ \& |(7)| $p_2$ \& |(8)| $p_3$ \& \\
 		|(9)|  $\Lambda_{q_2}$ \& |(10)|  $p_4$ \& |(11)| $p_5$ \& |(12)| $p_6$ \& \\
 		|(13)|  $\Lambda_{q_3}$ \& |(14)|  $p_7$ \& |(15)| $p_8$ \& |(16)| $p_9$ \& \\
		};
\end{tikzpicture}
\end{equation}
Here the $p_i$'s denote the probabilities for the set of states with the specified values for $Q$ and $A$, e.g. $p_4=\mu(\Lambda_{q_2}\cap\Lambda_{a_1})$.

From \eqref{premisa} we conclude that for any measure $\mu\in \Pi_{P_{q_1}}$ we have that all the $p_i$ are zero except for $p_1$ and $p_3$.
Eigenpreparation support then requires that
\begin{equation}\label{concl1}
	\Lambda_{q_1}\cap\Lambda_{a_2}=\varnothing.
\end{equation}
Then $p_2=0$ for arbitrary preparations.
From \eqref{premisb} we derive that for all $\mu\in\Pi_{T(P_{q_1})}$ we have 
\begin{equation}
	p_3=p_6=p_7=p_8=p_9=0.
\end{equation}
Now, let $\gamma\in\Gamma_T$ and for any two subsets of ontic states $\Delta_1,\Delta_2$ let $\Delta_1\stackrel{\gamma}{\cto}\Delta_2$ denote the set of states in $\Delta_1$ that have a finite probability of ending up in $\Delta_2$ under $\gamma$, i.e.,
\begin{equation}
	\Delta_1\stackrel{\gamma}{\cto}\Delta_2:=\set{\lambda\in\Delta_1}{\gamma(\Delta_2|\lambda)>0}.
\end{equation}
Now eigenpreparation support requires that, for any $\gamma\in\Gamma_T$, all the ontic states in $\Lambda_{q_1}$ must evolve to states that are compatible with the predictions of $T(P_{q_1})$.
This implies that
\begin{equation}
\begin{gathered}
	 \left(\Lambda_{q_1}\cap\Lambda_{a_1}\right)\stackrel{\gamma}{\cto}\Lambda_{a_3}=\varnothing,~
	 \left(\Lambda_{q_1}\cap\Lambda_{a_3}\right)\stackrel{\gamma}{\cto}\Lambda_{a_3}=\varnothing,\\
	 \left(\Lambda_{q_1}\cap\Lambda_{a_1}\right)\stackrel{\gamma}{\cto}\Lambda_{q_3}=\varnothing,~
	 \left(\Lambda_{q_1}\cap\Lambda_{a_3}\right)\stackrel{\gamma}{\cto}\Lambda_{q_3}=\varnothing.
\end{gathered}
\end{equation}
Making use of \eqref{concl1}, we can now conclude that all the ontic states in $\Lambda_{q_1}$ evolve to states in $\Lambda_{q_2}\cup\Lambda_{a_1}$ under the transformation $T$, that is,
\begin{equation}
	\gamma\left(\Lambda_{q_2}\cup\Lambda_{a_1}\middle|\lambda\right)=1~\Forall\lambda\in\Lambda_{q_1}.
\end{equation}
From this \eqref{thmineq} follows.
\end{proof}

So we see that for eigenpreparation supported models there are non-trivial constraints on the evolution of ontic states. 
Quantum mechanics gives predictions that can violate \eqref{thmineq}.
For the two observables we take
\begin{equation}
	Q=\sum_{i=1}^3q_i\ket{q_i}\bra{q_i},~
	A=\sum_{i=1}^3a_i\ket{a_i}\bra{a_i},
\end{equation}
with
\begin{equation}
\begin{IEEEeqnarraybox}[][c]{lll}
	\ket{q_1}=\cv{1}{0}{0},\quad&
	\ket{q_2}=\cv{0}{1}{0},\quad&
	\ket{q_3}=\cv{0}{0}{1},\\
	\ket{a_1}=\frac{1}{6}\sqrt{6}\cv{2}{1}{-1},\quad&
	\ket{a_2}=\frac{1}{2}\sqrt{2}\cv{0}{1}{1},\quad&
	\ket{a_3}=\frac{1}{3}\sqrt{3}\cv{-1}{1}{-1}.
\end{IEEEeqnarraybox}
\end{equation}
We then obtain a violation when the preparation $\ket{\psi}$ and transformation $U$ are given by
\begin{equation}
	U=\begin{pmatrix}
		\tfrac{1}{2}\sqrt{2} & \tfrac{1}{2}\sqrt{2} & 0\\
		\tfrac{1}{2}\sqrt{2} & -\tfrac{1}{2}\sqrt{2} & 0\\
		0 & 0 & 1
	\end{pmatrix},~
	\ket{\psi}=\frac{1}{10}\sqrt{10}\cv{1}{1}{2\sqrt{2}},~
	U\ket{\psi}=\frac{1}{10}\sqrt{5}\cv{2}{0}{4}.
\end{equation}
It is easy to check that \eqref{premis} is satisfied and for the probabilities in \eqref{thmineq} we have
\begin{equation}
	\born{q_2}{U\psi}=\born{a_1}{U\psi}=0,~\text{while}~\born{q_1}{\psi}=\frac{1}{10}>0.
\end{equation}

This demonstrates that eigenpreparation support is in conflict with quantum theory in Hilbert spaces of dimension 3 or more.
As the Kochen-Specker model provides a constructive example of an eigenpreparation supported model for 2 dimensional Hilbert spaces, Theorem \ref{stelling1} closes the logical gap between Kochen-Specker and Allen, Maroney and Gogioso.


\section{Noise tolerance and eigenpreparation support}\label{NoiseT}

In Section \ref{QMEU} we showed that quantum theory must be eigenpreparation undermining, in all Hilbert spaces of dimension greater than two.
In this Section, we will show that, unfortunately, this proof rests on a distinction between eigenpreparation support and eigenpreparation undermining that is not noise tolerant.
The proof uses probabilities which are assumed to be zero \eqref{premis}.
Such an assumption cannot be verified experimentally.
At best one can confirm that these probabilities are small, but the proof really requires them to be zero.
 Worse, we can show that for any finite value of these probabilities, one can devise eigenpreparation supported models that can reproduce the predictions of quantum mechanics.

We will therefore introduce a new way of characterizing eigenpreparation support and eigenpreparation undermining models, $(\alpha,\beta)$-support, that captures the qualitative features of these two types of macroscopic realism in a noise-tolerant way.
In our main result, we will then show how experimentally testable predictions of quantum theory are at odds with all qualitatively eigenpreparation supported models, in Hilbert spaces of dimension greater than 2.

\subsection{Robustness of the supported/undermining distinction}\label{robustsec}

As noted above, Theorem \ref{stelling1} relies on an assumption that cannot be verified experimentally.
This is just a symptom of a deeper problem, namely, that the distinction between eigenpreparation supported and eigenpreparation undermining is not noise-tolerant.
The reason is that the notion of eigenpreparation support implicitly relies on the use of the asymmetric overlap between preparations, which is not robust.
For two probability measures $\mu,\nu$ on $(\Lambda,\Sigma)$ the asymmetric overlap is defined as
\begin{equation}
	\varpi(\nu|\mu):=\inf\set{\mu(\Delta)}{\Delta\in\Sigma,\nu(\Delta)=1}.
\end{equation}
Or, in terms of the corresponding density functions,\footnote{The density functions are defined with respect to a suitably chosen background measure. Here, and throughout the remainder of this paper, this background measure is suppressed in our notation of the integral.}
\begin{equation}
	\varpi(f_\nu|f_\mu):=\int_{\supp f_\nu}f_\mu(\lambda)\dee\lambda.
\end{equation}
If we assume that $\Pi_q$ is convex, then eigenpreparation support results in the criterion that
\begin{equation}
	\sup_{\mu_q\in\Pi_q}\varpi(\mu_q|\mu_P)=\pee(q|Q,P) 
\end{equation}
for all $q\in\Omega_Q$, $\mu_P\in\Pi_P$, and $P\in\mathcal{P}$.

\begin{figure}
\begin{center}
\begin{subfigure}{0.23\textwidth}
	\includegraphics[width=\textwidth]{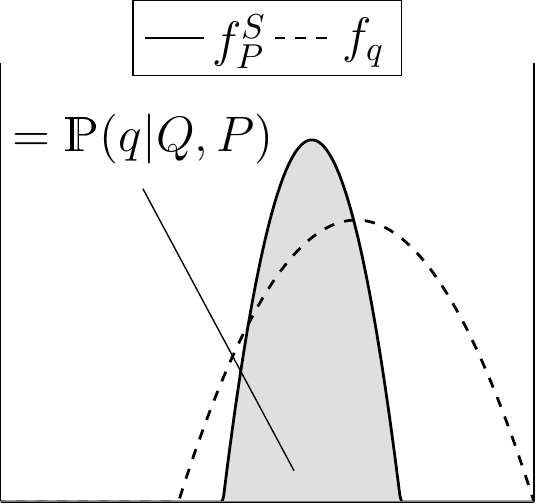}
	\caption{\label{A}}
\end{subfigure}
\begin{subfigure}{0.23\textwidth}
	\includegraphics[width=\textwidth]{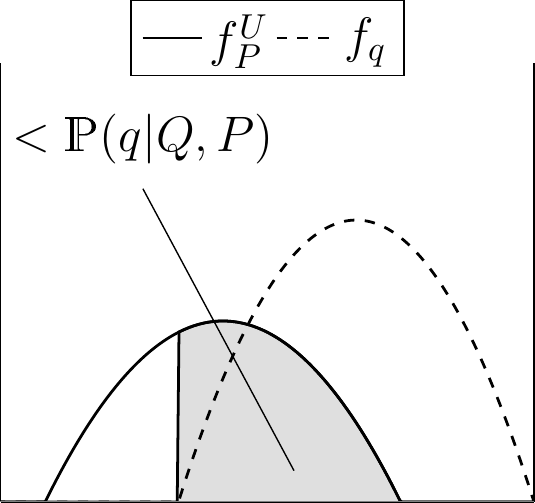}
	\caption{\label{B}}
\end{subfigure}
	
\begin{subfigure}{0.23\textwidth}
	\includegraphics[width=\textwidth]{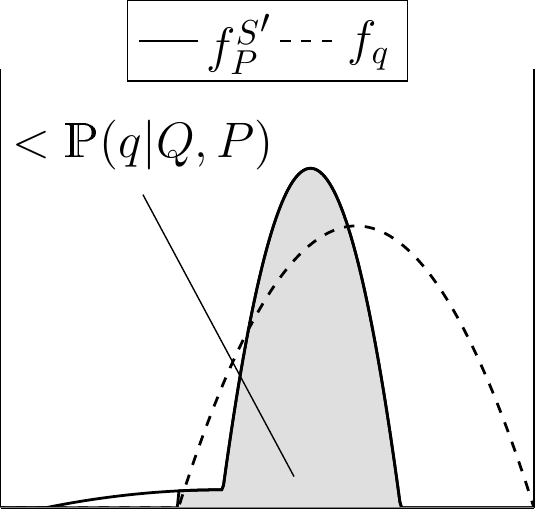}
	\caption{\label{D}}
\end{subfigure}
\begin{subfigure}{0.23\textwidth}
	\includegraphics[width=\textwidth]{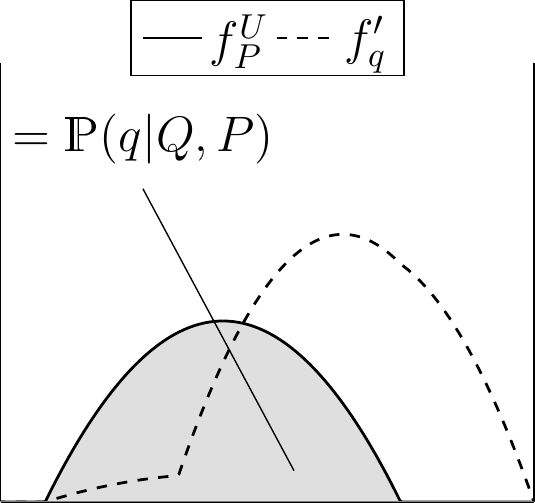}
	\caption{\label{C}}
\end{subfigure}
\caption{Schematic representation of some ontic models. In each figure the density for an eigenpreparation ($f_q$) and for an arbitrary preparation ($f_P$) are drawn, restricted to the set of ontic states where $Q=q$. The colored area corresponds to the asymmetric overlap. The models in \ref{A} and \ref{C} are eigenpreparation supported, while the other two are eigenpreparation undermining.\label{plaatje}}
\end{center}
\end{figure}

Figure \ref{plaatje} provides a good illustration of the problems with the asymmetric overlap.
In this figure we again zoom in on $\Lambda_q$.
The distributions in Figure \ref{A} are in accordance with eigenpreparation support: the asymmetric overlap $\varpi(f_q|f_P^S)$ equals the appropriate probability.
The distributions in Figure \ref{B}, on the other hand, are eigenpreparation undermining: the support of $f_P^U$ extends beyond the support of $f_q$.
However, by slightly modifying the model depicted in Figure \ref{B}, it can be made eigenpreparation supported.
This is done by replacing $f_q$ with
\begin{equation}
	f_q':=(1-\epsilon)f_q+\frac{\epsilon}{\pee(q|Q,P)}f_P^U,
\end{equation}	
as in Figure \ref{C}.
If $\epsilon$ is small enough, the distributions $f_q$ and $f_q'$ are experimentally indiscernible.
Consequently, for any eigenpreparation undermining model and any finite precision of measurements, we can construct a modified model that is eigenpreparation supported and experimentally indiscernible from the original model.
Conversely, every preparation supported model can be modified to obtain a preparation undermining model by slightly adjusting the distribution $f_P^S$ as in Figure \ref{D} where we have replaced it with
\begin{equation}
	{f_P^S}':=(1-\epsilon)f_P^S+\epsilon f_P^U.
\end{equation}

These considerations show that there is no way to experimentally rule out eigenpreparation supported models.
But they also suggest that the distinction between eigenpreparation supported and undermining models wasn't formalized in an operationally meaningful way.  
Just by looking at Figure \ref{plaatje} it is not difficult to convince oneself that the explanations provided for observed phenomena by the ontological model in Figure \ref{D} will be almost identical to those provided by the model in Figure \ref{A}.  
While Figure \ref{D} is, strictly, an eigenpreparation undermining model, it is, qualitatively, almost an eigenpreparation supported model.  
Similarly the model in Figure \ref{C} is, strictly, eigenpreparation supported, but its explanation for observed phenomena will have more in common with Figure \ref{B}.  
It is, qualitatively, almost eigenpreparation undermining.
So we wish to draw a distinction between eigenpreparation supported and undermining models that is based on a matter of gradation between the qualitative features of the models, rather than an all-or-nothing case.
The goal is to make this idea precise in a way that is noise-tolerant, which is what we do in the next section.

\subsection{\texorpdfstring{$(\alpha,\beta)$}{ab}-supported models}\label{LMRsec}

We now introduce our two parameter characterization of ontic models, which will allow us to draw our noise-tolerant distinction between the qualitative features of such models which make them eigenpreparation supported or undermining.  We start by looking at noise-tolerance.

As noted in the previous section, the problem of the robustness of the eigenpreparation supported/un\-der\-min\-ing distinction can be traced back to its reliance on the asymmetric overlap.
An intuitive suggestion then is to switch from the asymmetric overlap to the symmetric overlap, which \emph{is} noise-tolerant.
For two positive measurable functions $f,g$ the symmetric overlap is defined as
\begin{equation}
	\omega(f,g):=\int_\Lambda \min\left(f(\lambda),g(\lambda)\right)\dee\lambda.
\end{equation}
When $f$ and $g$ are densities for two probability measures $\mu_f$ and $\mu_g$, the symmetric overlap can be written as
\begin{equation}\label{symoverl}
	\omega(f,g)
	=1-\sup_{\Delta\in\Sigma}\left|\mu_f(\Delta)-\mu_g(\Delta)\right|
	=1-\frac{1}{2}\int_\Lambda\left|f(\lambda)-g(\lambda)\right|\dee\lambda.
\end{equation}
Operationally this has a well known and robust interpretation: if a system is prepared using one of the two preparation procedures, but one doesn't know which, then the best guess possible as to which procedure was used cannot succeed with a probability higher than $1-\tfrac{1}{2}\omega(f,g)$.

Unfortunately, just switching to the symmetric overlap tells us little about whether a model is more eigenpreparation supported or more eigenpreparation undermining.
Figure \ref{plaatje} reflects this: in \ref{A} and \ref{B} the symmetric overlap is of the same order even though they are supposed to be prime examples of the two types of macroscopic realism.
The problem is that we do not care about the size of the symmetric overlap, per se, but rather about the way the density for the preparation deviates in shape from the eigenpreparation.

To make this idea precise, we return to the case of eigenpreparation mixing (Figure \ref{EMMR}).
In the simple case where for every value of $Q$ there is precisely one eigenpreparation, and only one corresponding probability distribution in the ontic model, we can write
\begin{equation}
	f_P=\pee\left(q_1\middle|Q,P\right)f_{q_1}+\ldots+\pee\left(q_n\middle|Q,P\right)f_{q_n}.
\end{equation}
This is equivalent to the criterion that
\begin{equation}\label{OEMMcrit}
	\omega\left(f_P,\pee\left(q_i\middle|Q,P\right)f_{q_i}\right)=\pee\left(q_i\middle|Q,P\right)
\end{equation}
for all $q_i$.

Essentially, \eqref{OEMMcrit} tells us that eigenpreparation mixing means that $f_P$ is shaped like $f_{q_i}$ on $\Lambda_{q_i}$ for every $i$.
A violation of eigenpreparation mixing can thus be understood as a deviation from the shape of $f_{q_i}$.
Possible deviations can be classified in two categories corresponding to the two other types of macroscopic realism.
We will identify two parameters $\alpha$ and $\beta$ to characterize how much a given $f_P$ deviates in either of these two directions from the eigenpreparation mixing shape.

First we look at eigenpreparation support.  For an eigenpreparation supported model \eqref{OEMMcrit} will not hold in general.
However we can still find numbers $a_i$ such that
\begin{equation}\label{OEMMcrit2}
	\omega\left(f_P,a_if_{q_i}\right)=\pee\left(q_i\middle|Q,P\right),
\end{equation}
but now with $\sum_i a_i >1$.
If we take these $a_i$'s as small as possible, then we can take $\alpha=\max\{a_1,\ldots,a_n\}$ as a measure for how much the model deviates from eigenpreparation mixing in the eigenpreparation support direction.
In the limit where we allow $\alpha$ to go to infinity, we find that all eigenpreparation supported models can be understood as such deviations from eigenpreparation mixing.
This is because we have the following general relation between the symmetric and the asymmetric overlap:
\begin{equation}\label{limitexpr}
	\lim_{\alpha\to\infty}\omega(f_P,\alpha f_q)=\varpi(f_q|f_P).
\end{equation}
The original definition for eigenpreparation support was that the right hand side of \eqref{limitexpr} equals $\pee\left(q_i\middle|Q,P\right)$.
On the new reading we use the left hand side of \eqref{limitexpr} and eigenpreparation support corresponds to a finite value of $\alpha$ for which the equation
\begin{equation}\label{alphadef}
	\omega(f_P,\alpha f_q)=\pee\left(q\middle|Q,P\right)
\end{equation}
holds for all $q$.

Now we look at eigenpreparation undermining models.  These can be understood as a deviation from eigenpreparation mixing where the shape of $f_P$ is altered by extending its domain beyond the support of $f_q$.  
Now this will leave $\omega(f_P,f_q)\leq\varpi(f_q|f_P)<\pee\left(q\middle|Q,P\right)$, due to the support of $f_P$ that now lies outside the support of $f_{q_i}$.  
However, while scaling $f_q$ up to $\alpha f_q$ will eventually achieve $\omega(f_P,\alpha f_q)=\varpi(\alpha f_q|f_P)$, this will not change the value of the asymmetric overlap, as the supports of $f_q$ and $\alpha f_q$ are identical.  
There will remain a part of the support of $f_P$ which lies outside the support of $f_{q_i}$.

There will, however, always exist numbers $b_i$ such that
\begin{equation}
 \omega(f_P,f_{q_i})+b_i = \pee\left(q_i\middle|Q,P\right)
\end{equation}
If we take $\beta=\max\{b_1,\ldots,b_n\}$, the deviation of eigenpreparation undermining models can be quantified in terms of the parameter $\beta$ such that
\begin{equation}\label{betadef}
	\omega(f_P,f_q)+\beta\geq\pee\left(q\middle|Q,P\right).
\end{equation}
holds for all $q$.

Equations \eqref{alphadef} and \eqref{betadef} can be combined to give a more qualitative understanding of eigenpreparation supported and mixing models. 
For any model, there always exist $\alpha$ and $\beta$ such that
\begin{equation}
	\omega(f_P,\alpha f_q)+\beta\geq\pee\left(q\middle|Q,P\right)
\end{equation}
for all $q$. (Indeed, the inequality holds trivially for the choice $\beta=1$.)  Formally we introduce the following definition.

\begin{definition}\label{lmrdef}
Given an ontic model for a PTM model with macro-observable $Q$, preparation $P$ and constants $\alpha\in[0,\infty)$, $\beta\in[0,1]$, we say that $P$ is \emph{$(\alpha,\beta)$-supported on $\Lambda_q$} for some $q\in\Omega_Q$ if
\begin{equation}\label{abdef}
	\sup_{f_q\in\Pi_q}\omega\left(f_P,\alpha f_q\right)+\beta\geq \pee(q|Q,P)~\Forall f_P\in\Pi_P.
\end{equation}
If every preparation is $(\alpha,\beta)$-supported for all $q\in\Omega_Q$ we say that the ontic model is \emph{$(\alpha,\beta)$-supported}.
A PTM model is called \emph{$(\alpha,\beta)$-supported} if it admits an $(\alpha,\beta)$-supported ontic model.
\end{definition}


Now we look at the behavior of eigenpreparation supported and eigenpreparation undermining models in terms of $(\alpha,\beta)$-support.
We focus, as usual, upon a single outcome $q$ with eigenpreparation $f_q$ and we introduce the concept of $(\alpha,\beta)$-support curves.
The $(\alpha,\beta)$-support curve for a preparation $P$ is simply the modified overlap $\omega(f_P,\alpha f_q)$ viewed as a function of $\alpha$.
When $\alpha=0$, the modified overlap $\omega\left(f_P,\alpha f_q\right)=0$ and so $\beta\geq \pee(q|Q,P)$.  
As $\alpha$ is allowed to increase, $\beta$ is able to fall. 
At some point, however, $\omega\left(f_P,\alpha f_q\right)$ reaches a maximum, and $\beta$ reaches its lowest value.  
For eigenpreparation supported models, this value of $\beta=0$. 

These considerations are reflected in Figure \ref{overlapfigure}.
Here we have plotted the $(\alpha,\beta)$-support curves for each of the functions from Figure \ref{plaatje} as well as for the eigenpreparation mixing model (the $\psi$-ontic case is left out because for such models $\omega(f_P,\alpha f_q)=0$ for all values of $\alpha$). 

Next we look at the models in Figures \ref{D} and \ref{C}, as $\alpha$ and $\beta$ vary.  
In the case of Figure \ref{D}, as $\alpha$ is allowed to increase, $\beta$ will fall in much the same manner as for Figure \ref{A}.  
Only as the modified overlap approaches $\pee(q|Q,P)$ will any significant difference arise: Figure \ref{D} will reach a minimum $\beta$ value that is just above zero. 
By contrast, Figure \ref{C} will initially behave much the same as Figure \ref{B}.  
Only when $\alpha$ gets large, and $\beta$ approaches its minimum value for Figure \ref{B}, will a difference appear: for the model in Figure \ref{C}, $\beta$ will continue to fall slowly, as $\alpha$ rises.

\begin{figure}[ht]
\begin{center}
\includegraphics[width=0.6\textwidth]{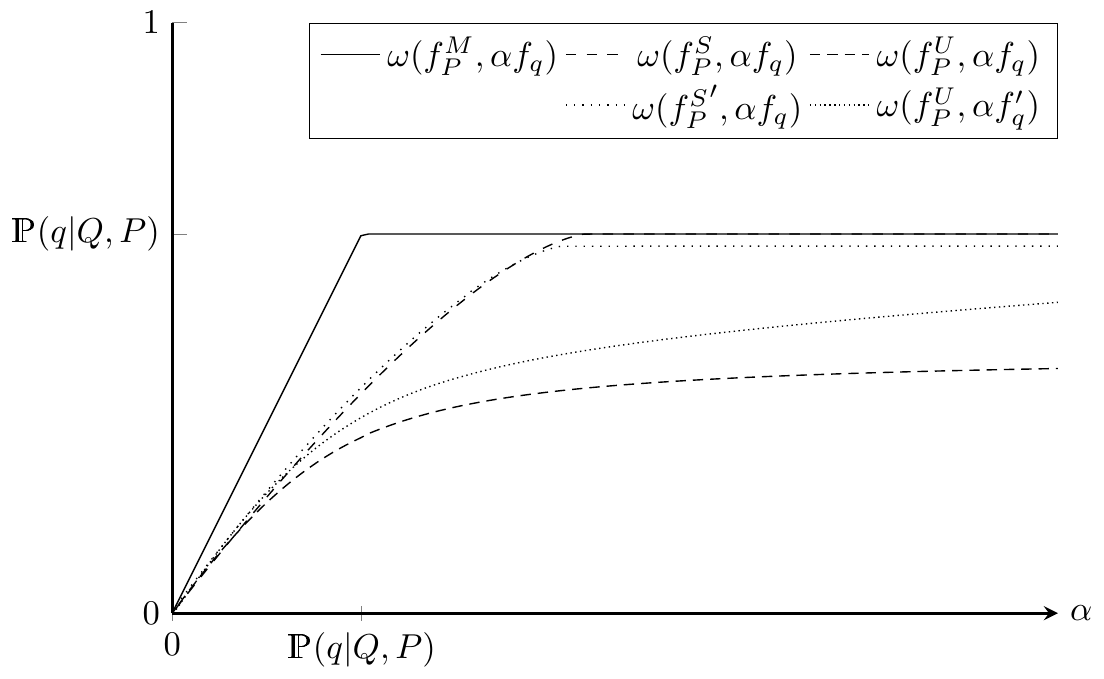}
\end{center}
\caption{Plots for the symmetric overlap as a function of $\alpha$ for the distributions from Figure \ref{plaatje}. Although for all the eigenpreparation supported models the overlap tends towards the maximum value $\pee(q|Q,P)$, this is at a very slow rate for the qualitatively eigenpreparation undermining distribution with $f_q'$. For the qualitatively eigenpreparation supported model with $f_P^{S'}$ on the other hand, the graph tends quite rapidly to its maximum value which is just below $\pee(q|Q,P)$. \label{overlapfigure}}
\end{figure}

We now see a clear qualitative similarity between Figures \ref{A} and \ref{D}, that is not shared with Figures \ref{B} and \ref{C}: $\beta$ can reach relatively small values, for relatively small values of $\alpha$.  While $\beta$ can become small for Figure \ref{C}, it can only do so for quite large values of $\alpha$ (and for Figure \ref{B} not at all).  It is this feature that captures the qualitative properties of the models more meaningfully than the question of whether $\beta$ can, for some sufficiently large value of $\alpha$, actually reach zero.

Broadly, a \emph{qualitatively} eigenpreparation undermining model is one in which there is some region $\Delta$ of the ontic state space, for which $\mu_q(\Delta)$ is very small, but $\mu_P(\Delta)$ is not small.  
In this case there is a significant region of the support of $f_P$ where $f_q$ is much smaller than $f_P$. 
This requires a large $\alpha$ scaling up of $f_q$ to rise above $f_P$, so $\beta$ cannot fall to low values until $\alpha$ has risen to large values.  
In the case of strict eigenpreparation undermining models, $f_q=0$ for this region and no $\alpha$ scaling will ever succeed to make $\beta=0$.

By contrast, in \emph{qualitatively} eigenpreparation supported models, every region $\Delta$  for which $\mu_q(\Delta)$ is small, must also have $\mu_P(\Delta)$ as small. 
There is then nowhere that $f_P$ is significantly larger than $f_q$. 
Such models can still be strictly eigenpreparation undermining when there is a region for which $\mu_P(\Delta)$ is small and $\mu_q(\Delta)= 0$. 
However, as $\mu_P(\Delta)$  is now small, scaling up $\mu_q(\Delta)$ by $\alpha$ does not miss very much of the support of $f_P$, and only a small value of $\beta$ is required to compensate.

What is important to both types of models is not whether $\mu_q(\Delta)= 0$ or is just very small, which is not robust against small changes in the model: it is whether the corresponding value of $\mu_P(\Delta)$ is small or large, which is robust against small changes.  Put in its simplest terms, then, a qualitatively eigenpreparation supported model is one which can be $(\alpha,\beta)$-supported, with both $\alpha$ and $\beta$ remaining relatively low.  A qualitatively eigenpreparation undermining model  is one where $\beta$ cannot become small without $\alpha$ becoming large, if at all.

We may regard this notion as a further weakening of the generalized eigenvalue-eigenstate link. 
This link requires that if the observable has a definite value, then the ontic state lies in the support of an eigenpreparation. 
The qualitatively weakened notion requires only that no region of the ontic state space $\Delta$, assigning a definite value to the observable, can be prepared with $\mu_P(\Delta)$ large, unless $\mu_q(\Delta)$ is also large for some eigenpreparation $q$ of the observable $Q$.  Qualitatively eigenpreparation supported models satisfy this further weakened link, while qualitatively eigenprepation undermining models do not.

Before moving on, a few formal remarks about Definition \ref{lmrdef} are in order.
We have introduced the supremum over all densities in $\Pi_q$ to account for the possibility of multiple eigenpreparations for $q$.
One can imagine that there are two densities $f_q^1$ and $f_q^2$ such that
\begin{equation}
	\omega(f_q^1,f_q^2)=\omega(f_q^1,f_P)=0,~\text{while}~\omega(f_q^2,f_P)=\pee(q|Q,P).
\end{equation}
In the quantum case, for example, this may happen whenever the eigenvalue $q$ is degenerate.
Even though $\omega(f_q^1,f_P)=0$, we still think of this case as eigenpreparation supported, and taking the supremum in \eqref{abdef} takes care of this.\footnote{We implicitly assume here that $\Pi_q$ is convex. To see this, consider the case where $\omega(f_q^1,f_q^2)=0$ and $\omega(f_q^1,f_P)=\omega(f_q^2,f_P)=\tfrac{1}{2}\pee(q|Q,P)$. This also suggests eigenpreparation support, but we see that if $f_q^1$ and $f_q^2$ are the only two distributions in $\Pi_q$, we need $\beta\geq\tfrac{1}{2}\pee(q|Q,P)$. But if we allow convex combinations, we may have $\beta=0$.}

Second, note that, if an ontic model is $(\alpha,\beta)$-supported, then it is also $(\alpha',\beta')$-supported whenever $\alpha'\geq\alpha$ and $\beta'\geq\beta$.
In fact, an ontic model does not give rise to unique values of $\alpha$ and $\beta$ for which it is $(\alpha,\beta)$-supported, but rather gives rise to a region of pairs in the parameter space $[0,\infty)\times[0,1]$.
This region is never empty since every macroscopic realist ontic model is $(\alpha,1)$-supported for all values of $\alpha$.
When restricting attention to a single preparation we can use Figure \ref{overlapfigure} to gain some more insight.
For any $\alpha\in[0,\infty)$ we find that the preparation $P$ is $(\alpha,\beta)$-supported on $\Lambda_q$ if and only if $\beta\geq\pee(q|Q,P)-\omega(f_P,\alpha f_q)$.

Finally, if the model is $(\alpha,0)$-supported for some value of $\alpha$, then it follows that the model is eigenpreparation supported.
Conversely, we see that if the model is eigenpreparation supported and $f_P$ is bounded, then there is always an $\alpha$ such that the model is $(\alpha,0)$-supported.
The requirement that $f_P$ is bounded is a misleading formality though.
After all, whether it is bounded or not depends on the choice of the background measure, but whether the model is $(\alpha,\beta)$-supported or not does not depend on this choice.
Thus, without loss of generality, we may always assume that $f_P$ is bounded (possibly by changing the background measure), and then we recover the statement that the model is $(\alpha,0)$-supported for some value of $\alpha$ if and only if the model is eigenpreparation supported.

\subsection{Constraints on \texorpdfstring{$\alpha$}{a} and \texorpdfstring{$\beta$}{b}}\label{THMsec}

To rule out $(\alpha,\beta)$-support for a given pair $(\alpha,\beta)$, it suffices to focus on a finite fragment of quantum mechanics containing a particular preparation $P$ (associated with a quantum state), and show that any ontic model in which $P$ is $(\alpha,\beta)$-supported makes predictions that contradict the quantum mechanical predictions.
This is actually similar to the case of eigenpreparation support.
Indeed, Theorem \ref{stelling1} only made use of the assumption that the support of $f_P$ is a subset of the supports of the $f_q$'s.
So it also rules out eigenpreparation undermining models as long as they behave as eigenpreparation supported models for the preparation $P$.

As noted in the previous section, if an ontic model is $(\alpha,\beta)$-supported with $\beta=0$, then it is eigenpreparation supported.
In section \ref{OESMRQM} we demonstrated that such models must satisfy an inequality that is significantly violated by quantum mechanics.
It is therefore not surprising that quantum mechanics poses further constraints on the possibility of models that are $(\alpha,\beta)$-supported with positive values of $\beta$.
In fact, Theorem \ref{stelling1} can be generalized to incorporate the notion of $(\alpha,\beta)$-support, leading to the following theorem:

\begin{theorem}\label{mainthm}
Let $(\mathcal{P},\mathcal{T},\mathcal{M})$ be a PTM model with two measurements $Q,A\in\mathcal{M}$ with each three possible measurement outcomes $\{q_1,q_2,q_3\}$, $\{a_1,a_2,a_3\}$ and a single eigenpreparation $P_{q_1}$ for the value $q_1$.
Let $P\in\mathcal{P}$ be any preparation.
For any pair $(\alpha,\beta)\in[0,\infty)\times[0,1]$, if there exists an ontic model for which $P$ is $(\alpha,\beta)$-supported on $\Lambda_{q_1}$, then for any transformation $T\in\mathcal{T}$ the following inequality holds:
\begin{equation}\label{mainineq}
	\pee(q_1|Q,P)-\pee(q_2|Q,T,P)-\pee(a_1|A,T,P)\leq
	\alpha\left(\pee(a_2|A,P_{q_1})+\pee(a_3|A,T,P_{q_1})+\pee(q_3|Q,T,P_{q_1})\right)+\beta.
\end{equation}
\end{theorem}

The proof for this theorem can be found in the appendix.
Theorem \ref{stelling1} is recovered as a special case by assuming that
\begin{equation}\label{unwaras}
	\pee(a_2|A,P_{q_1})=\pee(a_3|A,T,P_{q_1})=\pee(q_3|Q,T,P_{q_1})=0
\end{equation}
and setting $\beta=0$.
The inequality \eqref{mainineq} then reduces to \eqref{thmineq} and the notion of $(\alpha,\beta)$-support reduces to eigenpreparation support.

Even if we were to take \eqref{unwaras} for granted, we find that this theorem improves on Theorem \ref{stelling1}.
This can be seen as follows.
Theorem \ref{stelling1} can be paraphrased as the claim that, if a PTM model predicts that
\begin{equation}\label{PTMpred}
	\pee(q_1|Q,P)>\pee(q_2|Q,T,P)+\pee(a_1|A,T,P),
\end{equation}
then there does not exist an eigenpreparation supported ontic model for the PTM model.
In the new language, this is equivalent to the claim that there does not exist an $(\alpha,0)$-supported ontic model.
From Theorem \ref{mainthm} we can now further conclude that all $(\alpha,\beta)$-supported ontic models are ruled out with
\begin{equation}
	\beta<\pee(q_1|Q,P)-\pee(q_2|Q,T,P)-\pee(a_1|A,T,P)
\end{equation}
if the PTM model predicts \eqref{unwaras}.

\begin{figure}[ht]
\begin{center}
\includegraphics[width=0.6\textwidth]{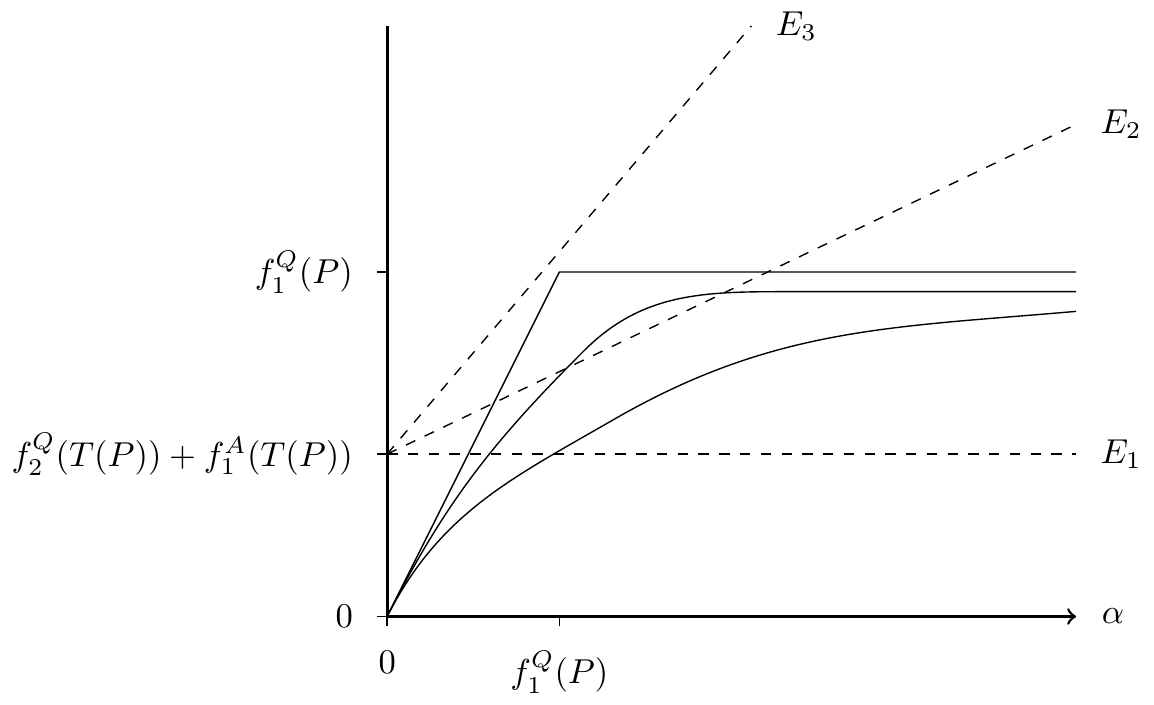}
\end{center}
\caption{Schematic depiction of how experimental results can be used to rule out $(\alpha,\beta)$-supported models.
The solid lines represent ontic models and depict their corresponding $(\alpha,\beta)$-support curves.
The dashed lines indicate constraints obtained by possible experimental data. The slope is given by  $f^A_2(P_{q_1})+f^A_3(T(P_{q_1}))+f^Q_3(T(P_{q_1}))$. An ontic model is ruled out by the experiment experimental data if the symmetric overlap crosses the line $E_i$ derived from that data. The line $E_1$ corresponds to the idealized case where \eqref{unwaras} holds and the line $E_3$ represents an experiment that is too noisy to rule out any models. \label{ab-exp-label}}
\end{figure}

The main problem with Theorem \ref{stelling1} was that \eqref{unwaras} can never be verified experimentally.
Theorem \ref{mainthm} solves this problem by putting constraints on $\alpha$ and $\beta$ even in the presence noise.
Experimentally, one can get estimates for the probabilities in \eqref{mainineq} in the form of relative frequencies for measurement outcomes.
So the experimental estimate for $\pee(m_i|M,P)$ would be $f_i^M(P)\in[0,1]$.
The acquired data set determines a line given by
\begin{equation}
	E(\alpha):= f^Q_2(T(P))+f^A_1(T(P))+\alpha\left(f^A_2(P_{q_1})+f^A_3(T(P_{q_1}))+f^Q_3(T(P_{q_1}))\right).
\end{equation}
In Figure \ref{ab-exp-label} three candidates for such a line are illustrated.
An ontic model is ruled out experimentally whenever $\omega(f_P,\alpha f_{q_1})$ crosses the line $E(\alpha)$ for some value of $\alpha$.
The explanation for this criterion runs as follows.
For any value of $\alpha$, the preparation $P$ is $(\alpha,\beta_\alpha)$-supported for the choice $\beta_\alpha=f^Q_1(P)- \omega(f_P,\alpha f_{q_1})$.
Now suppose there exists an ontic model for which there exists an $\alpha$ such that $\omega(f_P,\alpha f_{q_1})>E(\alpha)$.
We then find for this value of $\alpha$ that
\begin{equation}
\begin{split}
	f^Q_1(P)-f^Q_2(T(P))-f^A_1(T(P))
	={}&
	\beta_\alpha+\omega(f_P,\alpha f_{q_1})-f^Q_2(T(P))-f^A_1(T(P))\\
	>{}&
	\beta_\alpha+\alpha\left(f^A_2(P_{q_1})+f^A_3(T(P_{q_1}))+f^Q_3(T(P_{q_1}))\right).
\end{split}
\end{equation}
This contradicts the inequality of Theorem \ref{mainthm}, hence there cannot be an ontic model with an $\alpha$ such that $\omega(f_P,\alpha f_{q_1})>E(\alpha)$.
	
The best way to look for a fragment of quantum mechanics that can be used to test Theorem \ref{mainthm}, is by restricting attention to fragments in which the predicted values for $f^A_2(P_{q_1})$, $f^A_3(T(P_{q_1}))$ and $f^Q_3(T(P_{q_1}))$ are zero, i.e., in which \eqref{unwaras} holds.
Experimentally, these values will of course always be greater than zero and the slope of $E_i$ in Figure \ref{ab-exp-label} is also positive.
But at least for the current choice of the fragment we have that the slope becomes closer to zero as the precision of the measurements increases, with the idealized case represented by the line $E_1$.

Since experimentally the slope of the line $E_i$ will never actually be zero, the best theoretical option we have for ruling out as much models as possible is by maximizing the distance between $f_1^Q(P)$ and $f_2^Q(T(P))+f_1^A(T(P))$.
A numerical analysis in $\mathbb{R}^3$ gives the value 0.236 for this maximal value.
A concrete fragment of quantum mechanics that comes close to this value is given by
\begin{equation} 							
\begin{IEEEeqnarraybox}[][c]{lll}
	\ket{q_1}=\cv{1}{0}{0},\quad&
	\ket{a_1}=\frac{1}{6}\sqrt{6}\cv{2}{1}{-1},\quad&	
	U=\begin{pmatrix}
		\tfrac{1}{2}\sqrt{2} & -\tfrac{1}{2}\sqrt{2} & 0\\
		\tfrac{1}{2}\sqrt{2} & \tfrac{1}{2}\sqrt{2} & 0\\
		0 & 0 & 1
	\end{pmatrix},\\
	\ket{q_2}=\cv{0}{1}{0},\quad&
	\ket{a_2}=\frac{1}{2}\sqrt{2}\cv{0}{1}{1},\quad&	
	\ket{\psi}=\frac{1}{4}\cv{1+\sqrt{3}}{1-\sqrt{3}}{2\sqrt{2}},\\
	\ket{q_3}=\cv{0}{0}{1},\quad&
	\ket{a_3}=\frac{1}{3}\sqrt{3}\cv{-1}{1}{-1},\quad&
	U\ket{\psi}=\frac{1}{4}\sqrt{2}\cv{\sqrt{3}}{1}{2}.		
\end{IEEEeqnarraybox}
\end{equation}
For this set we have
\begin{equation}
	\born{q_1}{\psi}-\born{q_2}{U\psi}-\born{a_1}{U\psi}=\tfrac{1}{48}(10\sqrt{3}-7)\approx 0.215.
\end{equation}
	
It is worthwhile to delve a bit into the required accuracy for an experiment in order for it to rule out at least some models.
Looking again at Figure \ref{ab-exp-label} we find that the eigenpreparation mixing models are ruled out whenever $E(f_1^Q(P))<f_1^Q(P)$, i.e., when
\begin{equation}\label{EMcriterion}
	f^Q_2(T(P))+f^A_1(T(P))	+f^Q_1(P)\left(f^A_2(P_{q_1})+f^A_3(T(P_{q_1}))+f^Q_3(T(P_{q_1}))\right)<f^Q_1(P).
\end{equation}

The measured relative frequencies will of course deviate from the predicted quantum mechanical probabilities.
To get a view of the amount of deviation that is permissible, we consider a worst case scenario where every relative frequency deviates an amount $\epsilon$ from the predicted probability in the direction that is most problematic for ruling out any models.
We then look at the maximal value for $\epsilon$ that would still allow us to conclude that eigenpreparation mixing models are ruled out.
Any lower value for $\epsilon$ would then of course allow us to rule out more models.
Making use of \eqref{EMcriterion}, we find that $\epsilon$ should satisfy
\begin{multline}
	\pee(q_2|Q,T,P)+\pee(a_1|A,T,P)+2\epsilon\\
	+\left(\pee(q_1|Q,P)-\epsilon\right)
	\left(\pee(a_2|A,P_{q_1})+\pee(a_3|A,T,P_{q_1})+\pee(q_3|Q,T,P_{q_1})+3\epsilon\right)\\
	<\pee(q_1|Q,P)-\epsilon.
\end{multline}
When using the numbers from the above example, we find that this constraint is satisfied whenever $\epsilon<5.06\%$.
This means that, roughly, eigenpreparation mixing models can be ruled out with an experimental setup with a fidelity of at least 95\%.
This seems already an experimentally feasible value \citep{Knee12}, while we have to bear in mind that we considered a very pessimistic scenario.
We may thus expect that experiments ruling out macroscopic realist models can be carried out in the near future, without having to rely on an additional assumption of non-invasive measurability.


\section{Discussion}\label{Discus}

Our analysis of macroscopic realism relied heavily on the framework of PTM models and ontic models.
We adopted this framework to have an analysis that is to a large extent theory independent, mimicking the derivation and predictions for the experimental violations of Bell type inequalities in this sense.
Obviously, though, there is a connection with quantum mechanics in the background.
We use quantum mechanics to make predictions about what kind of preparations, transformations and measurements are physically possible.
Theory independence is re-obtained by ultimately verifying experimentally that these operations are indeed physically realizable.
Consequently, we can only make use of small \emph{finite} fragments of quantum mechanics in our analysis.
This is reflected in Theorem \ref{mainthm}, which only makes use of two preparations, one transformation and two measurements.
However, implicit further assumptions lurk in the background, and we shall discuss and elucidate those here.

It seems innocent enough to assume that the PTM model under consideration contains many elements (preparations, transformations and measurements) apart from the ones that will show up in experiments.
Most of the time these elements are just coming along for a free ride.
This can again be compared to the case of Bell tests.
Of course not all systems display non-local behavior.
The point of the tests is to show that there are finite sets of preparations and measurements that experimentally violate Bell inequalities.
The conclusion is then that any ontic model for any PTM model that incorporates these preparations and measurements must be non-local \emph{irrespective} of what other elements may be incorporated in the PTM model.

It is then important that any additional elements that may be present in a PTM model, play no role in our analysis whatsoever.
The simplest way to ensure this is to not mention these elements.
We, however, have not adhered to this credo everywhere.
An explicit example is in our Definition \ref{lmrdef}.
To see if a particular ontic model is $(\alpha,\beta)$-supported, one has to quantify over \emph{all} preparations in the PTM model.
However, what exactly is the set of all preparations is of course a theory dependent question.
Consequently, the question if nature allows $(\alpha,\beta)$-support is also theory dependent.
The quantification is unproblematic because we are interested in \emph{ruling out} $(\alpha,\beta)$-support, instead of showing that it holds.
For this it suffices to look at a finite set of preparations, transformations and measurements.
This can again be compared to the case of locality.
To show that Bell inequalities are violated, a finite fragment of quantum mechanics suffices.
But the related question if nature satisfies the Tsirelson bound cannot be answered experimentally, for it would require to verify that the bound is satisfied for \emph{all} preparations and measurements.

A more troublesome quantification over the elements of the PTM model occurred in Section \ref{QMnotEM}, where we discussed the relation between the Leggett-Garg inequality and eigenpreparation mixing models.
We showed that, for eigenpreparation mixing models, to check that a transformation is non-invasive, one only has to verify that the transformation is non-invasive for eigenpreparations.
However, showing this itself is not unproblematic.
For example, to show that $T(P_q)$ and $P_q$ are operationally equivalent, one has to check that
\begin{equation}\label{opeqeq}
	\pee(m|M,T,P_q)=\pee(m|M,P_q)
\end{equation}
for \emph{every} measurement $M$ and \emph{every} outcome $m$.
In practice, this is not feasible.
To solve this problem one can in addition assume the existence of a finite tomographically complete set of measurements as one does, for example, in tests of non-contextuality (\citealp{Spekkens09}; \citealp{Mazurek15}; \citealp{KunjwalSpekkens15}) or, less explicitly, in the recent Leggett-Garg test in \citep{Knee16}.
Then, to verify \eqref{opeqeq}, one only has to check that it holds for the tomographically complete set.
However, such an assumption is ungrounded without assuming the (partial) validity of some theory \citep[\S 4.4]{Hermens11}.

Our analysis of macroscopic realism does not face this problem.
Nowhere do we assume that any two procedures are operationally equivalent and so there is no need to experimentally verify such an assumption.
This is another way in which we improve on the Leggett-Garg result.
However, there is a related issue, leading to a qualifying remark on the logical limitations for experimentally discriminating between the different types of macroscopic realism.

We argued in Section \ref{OEMsec} that macroscopic realism and non-invasive measurability implies the existence of an eigenpreparation mixing ontic model.
In short, the argument was that, with the use of non-invasive measurements of $Q$, any preparation $P$ can be turned into an eigenpreparation of $Q$ by measuring $Q$.
Because the measurement is non-invasive, the ontic state is unaltered and $P$ can therefore be written as a mixture of these eigenpreparations.

But, even if this possibility is ruled out for a given PTM model, there always remains the possibility that a new non-invasive $Q$ measurement procedure could be added to the model.  The post-measurement preparations $P_q$, produced by such a measurement would represent new eigenpreparations, not contained within our original PTM model. So eigenpreparation mixing can only be ruled out for a given PTM model, and there is always the logical possibility of extending the model to include more eigenpreparations, thereby restoring the possibility of an eigenpreparation mixing ontic model.

In this sense, our results are not completely theory independent.
However, this doesn't affect the main thrust of our result.  
Namely, given the eigenpreparations that we currently know, if we want to have an ontic model, other preparations (corresponding to superpositions in quantum mechanics) compel us to introduce novel ontic states.  If the model is to be augmented with new eigenpreparations and measurements to recover eigenpreparation mixing, then at some point these augmented models must deviate from the predictions of quantum theory.

The situation may be compared to the results on the $\psi$-ontic/$\psi$-epistemic divide.\footnote{(\citealp{PBR12}; \citealp{BCLM14}; \citealp{Leifer14}).}
These results show that an epistemic interpretation cannot fully explain the indistinguishability of non-orthognal quantum states.
But this indistinguishability itself is not a given theory-independent fact.
There is the logical possibility that by going beyond quantum mechanics there are measurements that \emph{can} distinguish quantum states with a single shot.
Similarly, our results show that macroscopic realism does nothing to explain the peculiar nature of superpositions.
But there is still the logical possibility that by going beyond quantum mechanics there are preparations in terms of which superpositions can be understood as mixtures after all.

\section{Conclusion}\label{conclusie}

The Leggett-Garg inequality is the best known constraint on macroscopic realism, but its significance has been diminished both by its reliance on the assumption of non-invasive measurability, and by the existence of known counterexamples, such as the de Broglie-Bohm and Kochen-Specker models, which are able to violate the Leggett-Garg inequality while being macro-realist about the relevant observables.

\citet{MaroneyTimpson16} clarified the different kinds of macroscopic realism possible, drawing a distinction between the counterexamples and the types of macroscopic realist models ruled out by Leggett-Garg inequality violations.  
However, despite the work of \citet{Allen16} extending the range of models which were in conflict with quantum theory, the distinctions introduced between eigenpreparation support and eigenpreparation undermining models had finite precision loopholes, and so these distinctions could not be empirically tested.

In this paper we have reanalyzed the difference between eigenpreparation supported and eigenpreparation undermining models, to look for their qualitative features which are robust against small variations in the model.  We defined a qualitatively eigenpreparation supported model as one in which there are no regions of the ontic state space in which all eigenpreparations have a small support, but at least one preparation has a large support.  We introduced the concept of $(\alpha,\beta)$-supported models to parameterize this feature: qualitatively eigenpreparation supported models are models for which there exist low values for $\alpha$ and $\beta$ such that the model is $(\alpha,\beta)$-supported.

We then showed that macroscopic realist models had $(\alpha,\beta)$-support curves which could be compared to empirical data, to rule out classes of macroscopic realist models for quantum theory.
We showed that eigenpreparation mixing models, the only kind that could also be ruled out by Leggett-Garg inequality violations, could be ruled out at relatively modest experimental errors.
However, we can also go beyond that, and rule out qualitatively eigenpreparation supported models.
As the precision of experimental tests of quantum theory increases, progressively less qualitatively eigenpreparation supported models are possible.
In the limit of noise free experimental data, all eigenpreparation supported models are ruled out, recovering the noise-free result.
As an additional feature, we note that many eigenpreparation undermining models may also be ruled out.
In contrast to the Leggett-Garg inequality violation, no troublesome assumption of non-invasive measurability is needed for any of these experimental tests.

\section*{Acknowledgments}
We would like to thank John-Mark Allen, Andrew Briggs, George Knee, Anna Pearson and Chris Timpson for useful discussions and comments on earlier drafts of this paper.
This project/publication was made possible through the support of a grant from Templeton Religion Trust.
The opinions expressed in this publication are those of the authors and do not necessarily reflect the views of Templeton Religion Trust.


\begin{appendix}

\section{Proof of Theorem \ref{mainthm}}

In this appendix we give a proof of Theorem \ref{mainthm}.
The proof makes use of the following lemma.
\begin{lemma}\label{mainlem}
Let $(\alpha,\beta)\in[0,\infty)\times[0,1]$ and consider an ontic model with a preparation $P$ that is $(\alpha,\beta)$-supported on $\Lambda_q$ for some value $q$ for the macro-observable $Q$.
Let $g:\Lambda\to[0,1]$ be a measurable function. 
Then, for every probability density $f_P$ that models the preparation $P$ 
\begin{equation}
	\int_{\Lambda_q}g(\lambda)f_P(\lambda)\dee\lambda\leq\sup_{f_q\in\Pi_q}\int_{\Lambda_q}g(\lambda)\alpha f_q(\lambda)\dee\lambda+\beta.
\end{equation} 
\end{lemma}
\begin{proof}
For the proof we make use of the following notation for the minimum:
\begin{equation}
	\left(f\wedge g\right)(\lambda):=\min\left(f(\lambda),g(\lambda)\right).
\end{equation}
We begin with a simple estimate making use of the definition of $(\alpha,\beta)$-support.
\begin{equation}
\begin{split}
	\int_{\Lambda_q}g(\lambda)f_P(\lambda)\dee\lambda
	&=
	\int_{\Lambda_q}f_P(\lambda)\dee\lambda
	-\int_{\Lambda_q}\left(1-g(\lambda)\right)f_P(\lambda)\dee\lambda\\
	&\leq
	\sup_{f_q\in\Pi_q}\int_{\Lambda_q}\left(f_P\wedge\alpha f_q\right)(\lambda)\dee\lambda+\beta
	-\int_{\Lambda_q}\left(1-g(\lambda)\right)f_P(\lambda)\dee\lambda\\
	&\leq
	\sup_{f_q\in\Pi_q}\int_{\Lambda_q}\left(f_P\wedge\alpha f_q\right)(\lambda)\dee\lambda+\beta
	-\int_{\Lambda_q}\left(1-g(\lambda)\right)(f_P\wedge f)(\lambda)\dee\lambda.
\end{split}
\end{equation}
The last estimate holds for any function $f$.
So in particular it holds for $f=\alpha \tilde{f}_q$ for every $\tilde{f}_q\in\Pi_q$, and we can take the supremum over all elements of $\Pi_q$:
\begin{equation}
	\int_{\Lambda_q}g(\lambda)f_P(\lambda)\dee\lambda
	\leq
	\sup_{f_q\in\Pi_q}\int_{\Lambda_q}\left(f_P\wedge\alpha f_q\right)(\lambda)\dee\lambda+\beta
	-\sup_{\tilde{f}_q\in\Pi_q}\int_{\Lambda_q}\left(1-g(\lambda)\right)(f_P\wedge\alpha\tilde{f}_q)(\lambda)\dee\lambda.
\end{equation}
Finally, we make use of the fact that $\sup_{x\in X}g(x)-\sup_{x'\in X}h(x')\leq\sup_{x\in X}(g(x)-h(x))$ to obtain
\begin{equation}
	\int_{\Lambda_q}g(\lambda)f_P(\lambda)\dee\lambda
	\leq
	\sup_{f_q\in\Pi_q}\left(\int_{\Lambda_q}\left(f_P\wedge\alpha f_q\right)(\lambda)\dee\lambda+\beta
	-\int_{\Lambda_q}\left(1-g(\lambda)\right)(f_P\wedge\alpha f_q)(\lambda)\dee\lambda\right),
\end{equation}
which is the desired result.
\end{proof}

\begin{proof}[Proof of Theorem \ref{mainthm}]
Let $P$ be any preparation that is $(\alpha,\beta)$-supported on $\Lambda_{q_1}$ and let $T$ be any transformation.
Throughout the proof we let $\gamma\in\Gamma_T$ be fixed.
The first estimate is based on the fact that the $Q=q_1$ states that transform to $Q=q_2$ states under $T$ form a subset of all the states that transform to $Q=q_2$ states under $T$. 
\begin{equation}
\begin{split}
	\pee(q_1|Q,P)-\pee(q_2|Q,T,P)
={}& 
    	\int_{\Lambda_{q_1}}f_P(\lambda)\dee\lambda
    	-\int_\Lambda\int_{\Lambda_{q_2}}\gamma(\dee\lambda'|\lambda)f_P(\lambda)\dee\lambda\\
={}&
    	\int_{\Lambda_{q_1}}\sum_{i=1}^3\int_{\Lambda_{q_i}}\gamma(\dee\lambda'|\lambda)f_P(\lambda)\dee\lambda
    	-\int_\Lambda\int_{\Lambda_{q_2}}\gamma(\dee\lambda'|\lambda)f_P(\lambda)\dee\lambda\\
\leq{}&
    	\int_{\Lambda_{q_1}}\int_{\Lambda_{q_1}}\gamma(\dee\lambda'|\lambda)f_P(\lambda)\dee\lambda
    	+\int_{\Lambda_{q_1}}\int_{\Lambda_{q_3}}\gamma(\dee\lambda'|\lambda)f_P(\lambda)\dee\lambda
\end{split}
\end{equation}

In the noise-free case the second term in this final expression would be zero because there we assumed that the probability for any $Q=q_1$ state to transform to a $Q=q_3$ state is zero.
In the noise-tolerant case we need the notion of $(\alpha,\beta)$-support to constraint the term, which will be done by invoking Lemma \ref{mainlem}.
However, a stronger estimate is obtained if we postpone this invocation until we have a better estimate for the first term.
For this term we have
\begin{align}
\begin{split}
	\MoveEqLeft[6] \int_{\Lambda_{q_1}}\int_{\Lambda_{q_1}}\gamma(\dee\lambda'|\lambda)f_P(\lambda)\dee\lambda-\pee(a_1|A,T,P) \\ 
={}&
    	\int_{\Lambda_{q_1}}\int_{\Lambda_{q_1}}\gamma(\dee\lambda'|\lambda)f_P(\lambda)\dee\lambda
    	-\int_\Lambda\int_\Lambda\xi_A(a_1|\lambda')\gamma(\dee\lambda'|\lambda)f_P(\lambda)\dee\lambda\\
={}&
    	\int_{\Lambda_{q_1}}\int_{\Lambda_{q_1}}\sum_{i=1}^3\xi_A(a_i|\lambda')\gamma(\dee\lambda'|\lambda)f_P(\lambda)\dee\lambda
    	-\int_\Lambda\int_\Lambda\xi_A(a_1|\lambda')\gamma(\dee\lambda'|\lambda)f_P(\lambda)\dee\lambda\\
\leq{}&
	\int_{\Lambda_{q_1}}\int_{\Lambda_{q_1}}\xi_A(a_2|\lambda')\gamma(\dee\lambda'|\lambda)f_P(\lambda)\dee\lambda
	+\int_{\Lambda_{q_1}}\int_{\Lambda_{q_1}}\xi_A(a_3|\lambda')\gamma(\dee\lambda'|\lambda)f_P(\lambda)\dee\lambda.
\end{split}
\end{align}

Combining these two estimates we find that
\begin{equation}\label{mainest}
	\pee(q_1|Q,P)-\pee(q_2|Q,T,P)-\pee(a_1|A,T,P)  
\leq
	\int_{\Lambda_{q_1}}\left(\int_{\Lambda_{q_3}}\gamma(\dee\lambda'|\lambda)+\int_{\Lambda_{q_1}}\xi_A(\{a_2,a_3\}|\lambda')\gamma(\dee\lambda'|\lambda)\right)f_P(\lambda)\dee\lambda.
\end{equation}	
Now note that
\begin{equation}
\begin{split}
	\int_{\Lambda_{q_3}}\gamma(\dee\lambda'|\lambda)+\int_{\Lambda_{q_1}}\xi_A(\{a_2,a_3\}|\lambda')\gamma(\dee\lambda'|\lambda)
\leq{}&
	\int_{\Lambda_{q_3}}\gamma(\dee\lambda'|\lambda)+\int_{\Lambda_{q_1}}\gamma(\dee\lambda'|\lambda)\\
\leq{}&
	\int_\Lambda\gamma(\dee\lambda'|\lambda)=1.
\end{split}
\end{equation}
This means that we can apply Lemma \ref{mainlem} to \eqref{mainest} to obtain the final estimate:
\begin{align}
\begin{split}
	\MoveEqLeft[6] \pee(q_1|Q,P)-\pee(q_2|Q,T,P)-\pee(a_1|A,T,P) \\ 
\leq{}&
	\sup_{f_{q_1}\in\Pi_{q_1}}\int_{\Lambda_{q_1}}\left(\int_{\Lambda_{q_3}}\gamma(\dee\lambda'|\lambda)+\int_{\Lambda_{q_1}}\xi_A(\{a_2,a_3\}|\lambda')\gamma(\dee\lambda'|\lambda)\right)\alpha f_{q_1}(\lambda)\dee\lambda+\beta\\
={}&
	\alpha\left(\pee(q_3|Q,P_{q_1})+\pee(\{a_2,a_3\}|A,T,P_{q_1})\right)+\beta.
\end{split}
\end{align}
\end{proof}

\end{appendix}

\begingroup
\sloppy
\printbibliography[heading=bibintoc]
\endgroup

\end{document}